\documentclass{article}

\usepackage{amsmath, amssymb}
    \usepackage{url}

\usepackage{fullpage}

\newcommand{\ignore}[1]{}
\newtheorem{claim}{Claim}

\newtheorem{theorem}{Theorem}

\newtheorem{lemma}{Lemma}

\newtheorem{definition}{Definition}

\newtheorem{corollary}{Corollary}

\DeclareMathOperator{\winp}{win}
\DeclareMathOperator{\winn}{{\tt winn}}

\DeclareMathOperator{\argmax}{argmax}
\DeclareMathOperator{\vm}{v}
\DeclareMathOperator{\win}{\tt win}
\DeclareMathOperator{\supp}{{\tt sup}}

\DeclareMathOperator{\nwin}{{\cal WIN}}
\DeclareMathOperator{\np}{{\cal P}}
\DeclareMathOperator{\taker}{taker}

\DeclareMathOperator{\pp}{p}
\DeclareMathOperator{\p}{p}
\DeclareMathOperator{\pn}{pn}
\DeclareMathOperator{\vp}{v}


\newenvironment{proof}{\par\noindent{\bf Proof:}}{\hfill$\Box$\\}

\title{Optimistic-Conservative Bidding in Sequential Auctions}

\author{
Avinatan Hassidim\thanks{Bar-Ilan University and Google. {\tt avinatanh@gmail.com}}
\and Yishay Mansour\thanks{Blavatnik School of Computer Science, Tel
Aviv University. This research was supported in part by The Israeli
Centers of Research Excellence (I-CORE) program, (Center  No. 4/11),
by a grant from the Israel Science Foundation, by a grant from
United
  States-Israel Binational Science Foundation (BSF), and by a grant
  from the Israeli Ministry of Science (MoS). {\tt mansour@cs.tau.ac.il}} }

\begin{document}
\maketitle

\begin{abstract}
In this work we consider selling items using a sequential first
price auction mechanism. We generalize the assumption of conservative bidding to extensive form games (henceforth optimistic conservative bidding), and show that for both linear and unit demand
valuations, the only pure subgame perfect equilibrium where buyers are
bidding in an optimistic conservative manner is the minimal Walrasian
equilibrium.

In addition, we show examples where without the requirement of
conservative bidding, subgame perfect equilibria can admit a variety of unlikely predictions, including high price of
anarchy and low revenue in markets composed of additive bidders, equilibria which elicit all the surplus as revenue, and more. We also show that the order in which the items are sold can influence the outcome.

\end{abstract}
\thispagestyle{empty}
\newpage
\setcounter{page}{1}

\section{Introduction}

In everyday economy almost all items are sold at a price set by the
seller, given her belief on the demand. In some case, especially
when the seller does not have any credible information regarding the
demand, items are auctioned between multiple potential buyers.
For example, rare art artifacts are often auctioned in auction houses, such as
Sotheby's or Christie's.
While in theory the auction houses could have conducted a huge
combinatorial auction, this is rarely the road taken.
Items are simply auctioned individually at some order (decided by
the auctioneer). Even when there are multiple items sold at once
(e.g., in US spectrum auctions, or off-shore drilling rights), all
the participants know that in the future there will be more auctions
for the good, and strategies accordingly.\footnote{There are
numerous other examples of items which are auctioned sequentially,
such as the Japanese Fish auctions, the Dutch flower auctions, and
Internet ad auctions, such as the ones performed by Yahoo's Right
Media and Google's Ad-Exchange. Bidders may have combinatorial
valuations for the objects at sale, such as an advertiser who wants
a campaign in which a user doesn't see the same ad twice, or a
restaurant which needs to buy enough fish for the day.}

We consider selling
multiple items sequentially by holding a separate first price auction with no reserve price for each
item.
We stress that our goal is not to propose new mechanisms and analyze
their benefits, but rather concentrate on the simple existing per
item auctions and analyze their weaknesses and strengths.

Our solution concept is a sub-game perfect equilibrium (SPE), where
each buyer has a strategy that depends on the history. We assume a
full information model, where the seller and each buyer knows
everyone else's valuation. Note that although the seller knows the
buyers' valuations, she is limited in her ability to influence the
outcome since her only action is to select the order in which the
items are  sold.\footnote{The reason we do not allow for reserve
price is this full information model. While clearly the more
interesting case is the Bayesian model, solving the full information
model is a necessary step.}

The starting point of this work is those of Leme et al. \cite{LemeST12-itcs,LemeST12-soda}, who showed the following results:

\begin{enumerate}
  \item Existence: Every sequential auction has a pure equilibrium.
  \item Sort of good news: If all buyers are unit demand, then the price of anarchy of pure equilibria is $2$.
  \item Bad news: If all buyers are submodular, then the price of anarchy is $\Omega(m)$, where $m$ is the number of items.
\end{enumerate}

Our first contribution is to show a necessary and sufficient
condition for an allocation and prices to be a pure SPE for the
sequential auction. This condition is very weak, and we use it to
show that there are many equilibria that do not make sense, e.g., an
equilibrium for submodular bidders in which the social welfare is
half of the optimal welfare, but the seller takes it all as revenue.

Worse, we show an example with two additive bidders in which the
price of anarchy is $\Theta(m)$, where $m$ is the number of
items.\footnote{We also show that the PoA is at most $\Theta(m)$ for
subadditive bidders, but we are more worried about the lower bound
for additive bidders here.} Finally, we show that we can not hope to
have any non trivial revenue guaranteers on the auction, even when
there are two additive bidders.

Motivated by the bad examples of the additive bidders, we generalize
the notion of conservative bidding
\cite{BhawalkarR11,christodoulou2010bayesian} to sequential
auctions. The challenge here is twofold. First, conservative bidding
is a natural notion when players are additive, and requires
adjustment for non-additive bidders. Fortunately, there are several
previous works that extended this notion, see e.g.
\cite{feldman2012simultaneous,fu2012conditional}. The bigger
challenge is that in a sequential auction part of the reason a buyer
may want to buy an item is that it leads to a more desired branch of
the game tree, and we do not want to rule that out. Therefore, we
define {\em optimistic conservative} bidding, which allows a bidder
to place a bid such that if he miraculously wins this bid, he is no
worse than he is on the equilibrium path. This notion coincides with
conservative bidding for a single item. See Appendix \ref{app:con}
for a discussion on why we chose this solution concept, and its
relation to trembling hand equilibrium.

The notion of optimistic conservative bidding already allows us to
get that the unique equilibrium when all buyers are additive is the
``correct'' one, where the buyer with the highest valuation for each
item gets it and pays the second price.

Unfortunately, already for unit demand buyers optimistic
conservative bidding is not enough to guarantee the optimal outcome,
and indeed in the bad example of Leme et
al.\cite{LemeST12-soda} the players bid in an
optimistic conservative way. It turns out that the order of selling
the items has a crucial effect on both the revenue and the welfare.
This immediately raises the algorithmic challenge, of choosing the
order:

\noindent {\em Given the valuations of the players, and the outcomes
of previous auctions,\footnote{Note that we allow the seller to look
at the outcome of the previous auction before choosing what to sell
next. This is not a problem in the solution concept - imagine that
the seller first commits to a function which determines which item
is sold next and then an optimistic conservative SPE is chosen} what
is the optimal way to choose which item to sell next?}

We solve this challenge  for unit demand bidders, showing that if
the selling order is chosen correctly, the only pure optimistic
conservative equilibrium is the optimal outcome, and (perhaps more
important), the minimal Walrasian revenue (equal to the VCG revenue
in this case). In Appendix \ref{app:order} we discuss the selling
order, and give an example where it can have a factor of $m$ on the
revenue, for unit demand bidders.

\subsection{Related Work}

The equivalence between Walrasian pricing and parallel first price
auctions is presented in Bikhchandani \cite{Bikhchandanil99},
showing that there is a one-to-one correspondence between Walresian
equilibrium and pure Nash equilibria of selling the items using
parallel first price auctions.

The work of Hasidim et al. \cite{HassidimKMN11} discusses parallel
first price auctions and Walresian pricing. The main motivation
there was to understand the case of mixed equilibria in the parallel
first price auctions, in the cases where there is no Walresian
equilibrium. This work was later improved by \cite{feldman2012simultaneous}.

the study of sequential auctions in the CS literature was initiated by Leme et al. \cite{LemeST12-itcs}, which we already discussed. Feldman et al.~\cite{feldman2013limits} improved the lower bound on the price of anarchy, and showed that it is at least $\min(n,m)$ in a market where the bidders are either unit demand or additive. Syrgkanis and Tardos~\cite{syrgkanis2012bayesian} extended the model to the Bayesian case. In later work~\cite{syrgkanis2013composable} they proposed using composeable mechanisms to study a series of auctions performed together (in parallel or sequentially).

Sequential auctions were also studied extensively in the economic literature\cite{boutilier1999sequential,hausch1986multi}, where they discuss
different strategies that auction houses use to order the goods for sale \cite{ashenfelter1989auctions,bernhardt1994note}, and the change of price
between identical items sold one after the other \cite{gale1994bottom,mcafee1993declining}. The results there are not algorithmic, but are somewhat aligned with the selling order we present.

\subsection{Paper Organization}

The paper is organized  as follows. We start with some preliminaries
and notation. Then in Section \ref{sec:additive} we show the bad SPE
for additive players, and also that there is a single optimistic
conservative equilibrium. After this motivating example, we show the
characterization of the space of all possible equilibria (and
mention some bad examples deferring proofs to the appendix).
Finally, we move the more involved part of the paper, showing the
optimistic conservative equilibrium for unit demand players, and
proving that it is the unique pure equilibrium.

Following the advice in  the call for papers, in addition to a
motivating example we have included discussions in the potential
weak spots of the paper as well. One may not like the extension of
conservative bidding to extensive form games, and we point to
Appendix \ref{app:con} for a discussion. Another question revolves
around extending the result to more complicated valuations. One of
the challenges here is demand reduction, which is one of the pain
points in auction theory \cite{milgrom2004putting}. See Appendix
\ref{app:complex} for a discussion on why demand reduction is
relevant here, and what is needed to analyze it. Finally, one may be
worried that the seller has full information. We are less worried
about this, since the seller can only choose the next item to sell,
and studying the effect of the order on the revenue is an important
question even if the seller is ignorant.

\section{Model and Preliminaries}

\noindent{\bf Basic setting:} We assume that there is a single
seller with a set $M$ of $m$ items to sell. There is a set $N$ of
$n$ buyers who would like to buy the items. The items are sold sequentially
in first price auctions. We allow the seller to choose which item to sell next,
and how to set the tie breaking rule, as a function of the allocation of the previous items sold.\footnote{If one is content to settle for
$\epsilon$ equilibrium for arbitrarily small $\epsilon$, the tie breaking rule does not matter.} For an item $j$, we will usually denote $\win(j)$ the buyer who won item $j$.

Each buyer $i\in N$ has
a valuation $v_i(S)$ for every set $S\subset M$.
We assume that $v_{i}(\emptyset)=0$, for normalization.
Given a set of items $S\subset M$ for price $p(S)$ the utility of
buyer $i$ is quasi-linear, $u_i(S)=v_{i}(S)-p(S)$. Also, we assume
that buyers are risk neutral, namely, given a distribution over
outcomes, their utility is their expected utility.

A buyer is {\em additive} (also called {\em linear}), if there are
non-negative values $v_{i,1}, \ldots , v_{i,m}$ such that the
valuation is $v_i(S)=\sum_{j\in S} v_{i,j}$ and $v_i(\emptyset)=0$.

A buyer $i$ has a {\em unit demand} valuation function,
if there are non-negative values $v_{i,1}, \ldots , v_{i,m}$ such that the valuation is
$v_i(S)=\max_{j\in S} v_{i,j}$ and $v_i(\emptyset)=0$.
If a unit demand buyer already received a set of items $S$, we say that his value for an additional item $f$ is
the marginal value of that item $v_i(S \cup \{f\}) - v_i(S)$, and denote this by
$v_{i,f|S}$.

%

\smallskip
\noindent{\bf Solution concepts.} We assume a full information
model, where both the seller and each buyer know everyone's
valuation functions. Note that although the seller knows everything, the
influence she has is limited
to choosing the selling order.

We consider subgame perfect equilibrium. A {\em subgame perfect
equilibrium} (SPE) specifies a strategy for each buyer, such that
the set of strategies are in equilibrium from any possible state
(i.e., history), including states which are not reachable by the
combined set of strategies.\footnote{One may be suspect that since
the seller can choose the selling order dynamically, then the seller
should also be a part of the SPE. This is not the case - first the
seller commits to a function, which determines what item to sell
next given the previous auctions, and then the buyers play some
SPE.} It was shown by \cite{LemeST12-soda} that every sequential
auction has a pure SPE. For the most part, we consider in this work
strategies that depend on the history only through the allocation of
items (and not, for example, on specific bids).

{\bf Walrasian Equilibrium} was proposed by Leon Walras as early as 1874 as a solution concept, and is define as follows:

\begin{definition}
A {\em Walrasian Equilibrium} is a sets of prices $p_j$ for
$j\in[1,m]$ and allocation $\win(j)\in[1,n]$ where $\win(j)$ is the
winner of item $j$, such that:
(1) Each buyer $i$ does a best response, i.e., $v_{i,k}-p_k \geq
v_{i,j}-p_j$,where $win(i)=k$ or if $i$ does not receive any items
then $v_{i,j}\leq p_j$  for any item $j$.
(2) For each item $j$ we have $win(j)=i$ for some buyer $i$.
We define the sum of the prices $\sum_j p_j$ as the Walrasian
revenue.
\end{definition}

\noindent {\bf Conservative bidding} When characterizing auctions,
one of the main issues is our assumptions regarding the buyers that
would submit `loosing bids'.
A common way to handle  this issue in a one-shot auction is to
assume that the buyers bid {\em conservatively}, namely, they always
bid below their valuation. This notion does not immediately fit
sequential auctions. For example, consider an auction with two unit
demand buyers $A,B$ and two items $1,2$, where the selling order is
$1$ and then $2$. If buyer $A$ is not interested in $2$, and buyer
$B$ prefers $2$ to $1$, then a non zero bid of buyer $B$ on $1$ is
not a conservative move - he can anticipate that he will get item
$2$, and does not benefit to win $1$ as well.



We define the {\em optimistic conservative} as a guarantee to the
buyer that winning with his current bid would not result is a lower
utility than that on the equilibrium path. This is `optimistic'
since the buyer is assigning high likelihood that the equilibrium
path would be realized. Formally, assuming that we have a subset $S$
of items already allocated, then buyer $i$ would not bid for item
$j$ more than $\vp_{i,j|S}-u_{i|S}$,where $u_{i|S}$. (Note that the
allocation $S$ specifies a node in the game tree, since we assume
that the nodes of the game depend only on the outcomes of the
auctions, i.e., who wins.)
An optimistic conservative bidding guarantees that a buyer would
never be worse-off winning an item (which he is not suppose to
win).\footnote{ For a discussion on optimistic conservative bidding,
and the relation to the trembling hand equilibrium, we refer the
reader to the appendix.}

{\bf Price of Anarchy} is the ratio of the maximum social welfare to the worse equilibrium allocation,
where the valuation of an allocation is the sum of buyers valuation.

\section{Additive Valuations}\label{sec:additive}

It is often the  case that when all buyers have linear (additive)
valuation functions, one can look at each item independently, and
the analysis becomes simple, inheriting many of the nice properties
of single item auctions. Surprisingly, this is not the case for
sequential auctions.

\begin{theorem}
\label{thm:linear} When all buyers have additive valuation
functions, the Price of Anarchy of a sequential auction is
$\Theta(m)$, where $m$ is the number of items. The lower bound on
the PoA holds even when there are two buyers and all the items are identical (so the selling
order is irrelevant).
\end{theorem}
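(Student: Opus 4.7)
The plan splits the theorem into an $O(m)$ upper bound and a matching $\Omega(m)$ lower bound.

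For the upper bound, let $v^{\ast} := \max_{i,j} v_{i,j}$, attained by some bidder $i^{\ast}$ on some item $j^{\ast}$, and let $w$ be the SPE winner of $j^{\ast}$. In any SPE, $w$'s equilibrium bid on $j^{\ast}$ is at most $v_{w,j^{\ast}}$ (otherwise $w$ pays more than his value, and by additivity could instead not bid while preserving his continuation), so bidder $i^{\ast}$ has the one-shot deviation ``bid just above $w$'s bid on $j^{\ast}$'', which wins $j^{\ast}$ at price $\le v_{w,j^{\ast}}$ and then defaults to zero in every later sub-game by not bidding. Best-response for $i^{\ast}$ therefore gives $U_{i^{\ast}}^{\mathrm{SPE}} \ge v^{\ast} - v_{w,j^{\ast}}$. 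Combined with $W_{\mathrm{eq}} \ge v_{w,j^{\ast}}$ (from item $j^{\ast}$ alone), this yields $W_{\mathrm{eq}} \ge v^{\ast}/2$. Since $\mathrm{OPT} \le \sum_j \max_i v_{i,j} \le m v^{\ast}$, we conclude PoA $= O(m)$.

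For the lower bound, I exhibit an explicit pure SPE on two additive bidders and $m$ identical items. Set $v_A = 1$ and $v_B = 1/m$, so $\mathrm{OPT} = m$ by giving every item to $A$. The strategies depend on the allocation only through the single bit ``has $A$ won an item so far?''. In the ``$A$-has-none'' mode both bidders bid $1/m$ and the seller's tie-breaker (which, per the model, may depend on the allocation) awards the item to $B$; in the ``$A$-has-one'' mode both bidders bid $1$ and the tie-breaker favours $A$, who wins at price $1$ with zero surplus. On the equilibrium path $B$ wins items $1,\ldots,m-1$ at price $1/m$ each and $A$ wins only the last item at price $1/m$, giving welfare $2 - 1/m$ and hence PoA $\ge m/2 = \Omega(m)$.

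Sub-game perfection is verified by a short backward induction on the number $k$ of remaining items in the ``$A$-has-none'' mode, establishing that the continuation utilities are $U_A(k) = 1 - 1/m$ and $U_B(k) = 0$ for every $k \ge 1$; in the inductive step, both players' win-versus-lose thresholds in the current auction coincide at exactly $1/m$, so the prescribed bids are weakly best responses and the induction closes. The main obstacle is showing that the ``$A$-has-one'' branch is an SPE of its own sub-game, because naively $B$ should undercut and win cheaply. This is resolved by the seller's commitment to break ties in $A$'s favour in that mode, which makes both bidders genuinely indifferent among all bids in $[0, 1]$: $B$ loses every tie at $1$ (earning zero) and $A$ wins every tie at $1$ (also earning zero). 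Once the punishment branch is validated as a sub-game SPE, a deviation by $A$ in the main mode is deterred because it trades a current-item surplus of at most $1 - 1/m$ for the forfeited continuation surplus of exactly $1 - 1/m$ from the would-be-last-item auction.
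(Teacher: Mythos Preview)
Your decomposition matches the paper's (Theorem~\ref{thm:PoA-subadd} for the $O(m)$ upper bound, Lemma~\ref{lem:PoA-add} for the $\Omega(m)$ lower bound), and your lower-bound instance is exactly the paper's construction rescaled by $1/m$. Two issues, however, deserve attention.

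\textbf{Upper bound: the claim ``$w$'s bid on $j^{\ast}$ is at most $v_{w,j^{\ast}}$'' is false.} Your justification --- ``by additivity $w$ could instead not bid while preserving his continuation'' --- is exactly what fails in sequential auctions: additivity makes $w$'s \emph{values} history-independent, but the other players' SPE strategies, and hence $w$'s continuation payoff, do depend on who wins $j^{\ast}$. Concretely, take two additive bidders and three items with $v_A=(10,0,0)$, $v_B=(2,5,10)$; by the paper's characterization (Theorem~\ref{folks}) the allocation $\win=(B,B,B)$ with prices $p=(10,0,0)$ and selling order $1,2,3$ is an SPE. Here $j^{\ast}=1$, $i^{\ast}=A$, the winner is $w=B$, and $B$'s winning bid is $10>v_{B,1}=2$. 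Your derived inequality $U_{i^{\ast}}\ge v^{\ast}-v_{w,j^{\ast}}=8$ then fails, since $U_A=0$. The paper's argument avoids this by never bounding $w$'s bid: it simply lets $i^{\ast}$ snatch $j^{\ast}$ at the \emph{actual} price $p(j^{\ast})$ (whatever it is), giving $U_{i^{\ast}}\ge v^{\ast}-p(j^{\ast})$, and pairs this with $W_{\mathrm{eq}}\ge\text{revenue}\ge p(j^{\ast})$ rather than with $W_{\mathrm{eq}}\ge v_{w,j^{\ast}}$. Replacing $v_{w,j^{\ast}}$ by $p(j^{\ast})$ in both of your inequalities fixes the proof.

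\textbf{Lower bound: a minor inconsistency in the tie-breaking rule.} You declare that in the ``$A$-has-none'' mode the tie-breaker always awards the item to $B$, yet your equilibrium path (and your base case $U_A(1)=1-1/m$) have $A$ winning the last item. With your stated rule $B$ would win all $m$ items and $A$ would profitably deviate on item $m$. The paper resolves this exactly as your backward induction implicitly assumes: ties favour $B$ on items $1,\ldots,m-1$ and favour $A$ on item $m$. State this explicitly and your verification goes through.
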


The above theorem would follow from Theorem~\ref{thm:PoA-subadd} and
Lemma \ref{lem:PoA-add}. It is interesting to compare
Theorem~\ref{thm:linear} with the results of \cite{LemeST12-soda},
which showed a PoA of 2 for unit demand bidders, and showed that the
price of anarchy is $\Omega(m)$ for sub-modular buyers.

The upper bound on the PoA in Thorem \ref{thm:linear} actually holds
for sub-additive buyers, as the following theorem shows:

\begin{theorem}
\label{thm:PoA-subadd}
When all buyers have sub-additive valuations, the Price of Anarchy of a sequential auction is $O(m)$, where $m$ is the number of items.
\end{theorem}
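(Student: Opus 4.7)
The plan is to bound $\mathrm{PoA} \leq m$ by a one-node deviation argument reminiscent of smoothness. The central quantity will be $V^\star := \max_{i \in N, j \in M} v_i(\{j\})$, the largest single-item value across buyers. The argument reduces to two inequalities: an easy upper bound $\mathrm{OPT} \leq m V^\star$ from sub-additivity, and a structural lower bound $\mathrm{EQ} \geq V^\star$ on every SPE social welfare. Combining the two gives $\mathrm{OPT}/\mathrm{EQ} \leq m$.

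The first inequality is immediate: iterated sub-additivity on each buyer's optimal bundle $S_i^\star$ yields $v_i(S_i^\star) \leq \sum_{j \in S_i^\star} v_i(\{j\})$, and summing over the disjoint $S_i^\star$ gives at most $m V^\star$. For the second, I fix any SPE and let $(i^\star, j^\star)$ attain $V^\star$. If $i^\star$ wins a set containing $j^\star$ in the SPE, then monotonicity already gives $\mathrm{EQ} \geq v_{i^\star}(\{j^\star\}) = V^\star$. Otherwise, let $p_{j^\star}$ be the winning bid for $j^\star$ on the equilibrium path, and consider the following deviation for $i^\star$: play the SPE strategy at every node except at the auction for $j^\star$, where $i^\star$ bids $p_{j^\star} + \epsilon$ and wins, and then bid $0$ at every subsequent auction. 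The utility from this deviation is at least $V^\star - p_{j^\star} - \epsilon$, since monotonicity ensures any extra items $i^\star$ happens to pick up at price $0$ in the continuation only add non-negative value. The SPE condition then forces $u_{i^\star} + p_{j^\star} \geq V^\star - \epsilon$, and since $\mathrm{EQ} = \sum_i u_i + \sum_j p_j \geq u_{i^\star} + p_{j^\star}$, letting $\epsilon \to 0$ completes the bound.

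The step I expect to be most delicate is the continuation subgame after $i^\star$'s unexpected win of $j^\star$. Because the seller's choice of the next item is a function of the allocation so far, the deviation can re-route play into a subgame different from the equilibrium path, and the other buyers then play their SPE strategies for that subgame rather than the strategies prescribed on the original path. The key observation that makes this harmless is that $i^\star$ does not have to play optimally in that branch: simply bidding $0$ throughout, together with monotonicity, already secures the claimed lower bound $V^\star - p_{j^\star} - \epsilon$ on the deviation utility, which is everything the smoothness-style calculation needs. With this point verified, the two inequalities assemble the bound $\mathrm{PoA} = O(m)$.
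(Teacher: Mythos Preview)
Your argument is essentially the same as the paper's: both hinge on the single deviation in which $i^\star$ snatches item $j^\star$ when it comes up, and both combine the resulting inequality with the sub-additivity bound $\mathrm{OPT}\le mV^\star$. Your accounting via $\mathrm{EQ}=\sum_i u_i+\sum_j p_j$ is in fact cleaner than the paper's case split on whether $v_{i^\star}(S_{i^\star})>V^\star/2$, and it yields $\mathrm{PoA}\le m$ rather than the paper's $2m$.

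There is one small slip you should patch. In your deviation, $i^\star$ plays the SPE strategy up to the auction for $j^\star$, so on that prefix $i^\star$ may already have purchased a set $B$ of items at total price $P_B=\sum_{k\in B}p_k$. The deviation utility is therefore at least
\[
v_{i^\star}\bigl(B\cup\{j^\star\}\cup(\text{free items})\bigr)-P_B-(p_{j^\star}+\epsilon)\ \ge\ V^\star-P_B-p_{j^\star}-\epsilon,
\]
not $V^\star-p_{j^\star}-\epsilon$ as you wrote. The SPE condition then gives $u_{i^\star}+p_{j^\star}+P_B\ge V^\star-\epsilon$, and since $B$ and $\{j^\star\}$ are disjoint subsets of $M$ (you are in the case $j^\star\notin S_{i^\star}$), you still have $\mathrm{EQ}=\sum_i u_i+\sum_j p_j\ge u_{i^\star}+p_{j^\star}+P_B$, so the conclusion $\mathrm{EQ}\ge V^\star$ survives unchanged. (The paper's own one-line justification glosses over the same term; your welfare-equals-utilities-plus-revenue identity is exactly what makes the fix painless.)
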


\begin{proof}
Let $v_{i,j}$ be the valuation buyer $i$ assigns to item $j$, if
this is the only item that she gets. Let $M = \max_{i,j} v_{i,j}$,
and let $i^*, j^*$ be such that $v_{i^*,j^*} = M$. Since the buyers
are sub-additive, the social welfare is bounded by $m \cdot M$.

Consider the equilibrium path. If the valuation of buyer $i$ bundle
is more than $M/2$, we are done. Else, it must be that the price of
$j^*$ is at least $M/2$ (since buyer $i^*$ prefers the equilibrium
path to snatching $j^*$ when it comes up for sell). In this case,
the revenue is lower bounded by $M/2$, and so is the social
welfare.
\end{proof}

\begin{lemma}
\label{lem:PoA-add}
The Price of Anarchy for additive buyers is $\Omega(m)$ even for identical items and for any selling order.
\end{lemma}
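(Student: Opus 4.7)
The plan is to exhibit a two-buyer, $m$-item instance with identical items together with a pure SPE whose social welfare is only a $\Theta(1/m)$ fraction of the optimum. Set $v_{1} = 1$ and $v_{2} = m$ per item. The optimal allocation assigns every item to buyer $2$ for welfare $m^{2}$, whereas my candidate SPE has buyer $1$ winning the first $m-1$ items (each at price $1$) and buyer $2$ winning only the last item (also at price $1$), yielding welfare $(m-1) + m = 2m - 1$ and hence $\mathrm{PoA} \geq m^{2}/(2m-1) \geq m/2 = \Omega(m)$. Since the items are identical, this bound holds for every selling order.

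The strategy profile depends only on the current allocation $(k_{1}, k_{2})$, and I will use the seller-chosen tie-break rule as a lever. On the equilibrium path, at each state $(k, 0)$ with $k \leq m - 2$, both buyers bid $1$ and ties are broken for buyer $1$, so buyer $1$ wins at price $1$; at state $(m-1, 0)$ both still bid $1$, but ties are broken for buyer $2$, who wins the last item at price $1$. The crucial off-path piece is what I will call the \emph{freeze} sub-SPE: at every state $(k_{1}, k_{2})$ with $k_{2} \geq 1$, both buyers bid $v_{2} = m$ and ties are broken for buyer $2$; inductively, buyer $2$ then wins every remaining item at price $m$, so each player's continuation utility from the freeze is $0$.

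The verification proceeds by backwards induction on the number of remaining items. For the freeze: against the opponent's bid of $m$, every bid in $[0, m]$ yields current-round utility $0$ (either a loss with continuation $0$, or, in the tied case, a win at price $v_{2} = m$ with zero marginal utility), while any bid above $m$ is strictly worse, so bidding $m$ is a weak best response and the freeze is a valid Nash equilibrium of the one-shot game at every such state. On the main path, buyer $2$'s equilibrium payoff is $m-1$, earned by winning the last item at price $1$. Any deviation in which buyer $2$ outbids buyer $1$ at some round $k \leq m-2$ requires a bid strictly above $1$ and immediately transitions the game into the freeze sub-SPE, so the deviation's total payoff is at most $(v_{2} - 1 - \delta) + 0 < m - 1$. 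Buyer $1$ is weakly indifferent between winning at price $1$ and losing (both continuations yield $0$) and strictly worse off overbidding, so he conforms.

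The main conceptual obstacle is seeing that the freeze sub-SPE is a legitimate Nash equilibrium despite being weakly dominated: it works precisely because we do not impose conservative bidding, so buyer $1$ is allowed to bid $v_{2} \gg v_{1}$ in off-path states. That off-path threat is exactly what deters buyer $2$ from snatching any of the $m-1$ cheap items reserved for buyer $1$ on the equilibrium path, producing the $\Omega(m)$ welfare gap.
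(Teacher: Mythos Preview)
Your construction is correct and is essentially the same as the paper's: two additive buyers with per-item values $1$ and $m$, the low-value buyer wins the first $m-1$ items at price $1$, the high-value buyer wins the last item at price $1$, and off-path (once the high-value buyer holds an item) both bid $m$ with ties to the high-value buyer, yielding zero continuation utility and deterring early grabs. Your verification is somewhat more explicit than the paper's, but the instance, the punishment scheme, and the welfare calculation $(2m-1)$ versus $m^{2}$ are identical.
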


\begin{proof}
Consider a market with two additive buyers $A,B$, and $m$ items,
denoted $1, \ldots, m$. We have that for any $i$
\[V_{A,i} = m\;\;\; \textrm{ but }\;\;\; V_{B,i} = 1\]
As the items are identical, the selling order is irrelevant, and we therefore assume that they are always sold in the order $1$ to $m$.

It is enough to present a bad equilibrium. While buyer $B$ received all items that were on sale so far, both buyers bid $1$
and ties are broken in favor of $B$ on the first $m - 1$ items and
in favor of $A$ in the last item. If $A$ has at least one item both
buyers bid $m$, and ties are broken in favor of $A$.

In the described strategy $A$ has a utility of $m-1$ while $B$ has a
utility of $0$. Observe that if $A$ outbids $B$ on any item, he wins
all the remaining items and has a utility of zero for the additional
items. This implies that $A$ will end with a utility strictly less
than $m-1$. Clearly $B$ has no incentive to deviate since he will
always have utility $0$.\footnote{The fact that this is a SPE also
follows from our characterization of subgame perfect equilibria
(Theorem \ref{folks})} It is easy to see that it obtains a social
welfare of $2m - 1$ where the maximal social welfare is $m^2$.
\end{proof}

In the market we presented, at least the revenue behaves correctly,
in the sense that the revenue is the Walrasian revenue. The
following example shows that this is not always the case.

\begin{theorem}
There exists a sub-game perfect equilibrium for additive buyers where the ratio of its revenue to the Walresian revenue is vanishing.
\end{theorem}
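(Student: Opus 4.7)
The plan is to exhibit a concrete family of instances with two additive bidders and $m$ items for which the minimal Walrasian revenue grows linearly in $m$ while some SPE has revenue bounded by a constant. I would take $v_{A,j}=2$ and $v_{B,j}=1$ for every item, and set $m=2k+3$ for large $k$. The (minimum) Walrasian revenue is $m$: A receives everything, and each price must be at least $v_{B,j}=1$ so that B's best response is to decline, yielding total revenue $m$. The construction of the low-revenue SPE will parallel the grim-trigger idea from Lemma~\ref{lem:PoA-add}, but split the items between A and B so that both players simultaneously have positive continuation utility to sustain vanishing prices.

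On the equilibrium path I would sell the first $m-2=2k+1$ items at price $\epsilon$ in alternating fashion starting from B --- odd-indexed items go to B with A bidding $0$, even-indexed items go to A with B bidding $0$ --- and then sell the last two items $2k+2$ and $2k+3$ to A at price $1$ via the standard single-item argument. Off path, I would define a grim-trigger: at any state reached by a deviation in the alternating phase, switch to the SPE of the remaining subgame in which both players bid $L=2$ on every remaining item and A wins via tie-breaking, driving both players' continuation utilities to $0$.

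The verification would decompose into three pieces: (i) showing the grim-trigger continuation is itself a valid SPE (at each single-item subgame along its path, both bidding $2$ with tie to A is a Nash equilibrium since both are indifferent among any bid $\leq 2$, and these Nash equilibria compose by backward induction into a multi-item SPE); (ii) verifying the two-item tail at price $1$ via the standard single-item argument with values $(2,1)$; and most importantly (iii) checking that no on-path deviation is profitable. Step (iii) reduces to two inequalities: A's one-shot gain of at most $2-\delta$ from stealing a B-item must be dominated by A's on-path tail continuation of $2(L-1)=2$, and B's one-shot gain of at most $1-\delta$ from stealing an A-item must be dominated by at least one future on-path B-item win worth $1-\epsilon$. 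The main obstacle is the boundary calibration --- ensuring these inequalities still hold at the very last alternating item of each type. This is exactly why the alternation must begin with B (so that the last alternating item is a B-item, where A's inequality is the binding one, rather than putting a zero-continuation B-item after the last A-item) and why a tail of length $2$ with $L=2$ gives exactly the borderline continuation $2(L-1)=2$ that A needs. Once this calibration checks out, the resulting SPE revenue is $(2k+1)\epsilon+2$, which tends to $2$ as $\epsilon\to 0$, while the Walrasian revenue is $m=2k+3$, so the ratio tends to $0$ as $k\to\infty$, giving the theorem.
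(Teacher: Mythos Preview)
Your grim-trigger idea is the right one and mirrors the paper's approach, but the specific bid profile you prescribe on the alternating phase is not an equilibrium. You have the loser bid $0$ and the winner bid $\epsilon$. In a first-price auction this means the winner can shade: on a $B$-item, since $A$ bids $0$, buyer $B$ can lower her bid from $\epsilon$ to any $b\in(0,\epsilon)$, still win, and pay less. If strategies condition only on the allocation (as the paper assumes), the continuation is unchanged and this is a strict improvement for $B$. Even if you let the grim trigger fire on any bid deviation, the problem persists at the last $B$-item $2k+1$: $B$'s on-path continuation after that item is $0$ (the tail goes entirely to $A$), so deviating to bid $b<\epsilon$ yields $1-b>1-\epsilon$, beating the on-path payoff. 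Hence your profile is not an SPE as written, and your step~(iii) is incomplete because it only checks overbidding by the loser, not underbidding by the winner.

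The fix is the same device the paper uses: have \emph{both} players bid $\epsilon$ on every alternating item and use the tie-breaking rule to assign the winner (odd items to $B$, even items to $A$). Then any underbid by the intended winner hands the item to the opponent, triggers the punishment, and is unprofitable; your overbidding inequalities go through unchanged. With that correction your construction works. For comparison, the paper's example is simpler: it keeps the valuations asymmetric ($v_{A,j}=m$, $v_{B,j}=1$ for $j<m$, $v_{B,m}=\epsilon$), lets $B$ win everything except the last item at price $\epsilon$ with both bidding $\epsilon$, and gives the last item to $A$. Only $A$'s overbidding incentive needs to be checked (since $B$ is already getting the items she values less at a low price), and the single tweak $v_{B,m}=\epsilon$ lets the last item also sell at $\epsilon$ without needing a separate high-price tail. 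The paper thus gets revenue $m\epsilon$ against Walrasian revenue $m-1+\epsilon$, avoiding the alternating bookkeeping and the two-item tail.
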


\begin{proof}
As before, consider a market with two additive buyers $A,B$, and $m$
items, denoted $1, \ldots, m$. We have that $V_{A,i}=m$ for any item
$i$ and $V_{B,i}=1$ for $i\leq m-1$ and $V_{B,m} = \epsilon$. Given
that item $m$ is always sold last, the following is an equilibrium:

While buyer $B$ receives all items that are on sale so far, both
buyers bid $\epsilon$ (i.e.,  $b_A = b_B = \epsilon$) and ties are
broken in favor or $B$ on the first $m - 1$ items and in favor of
$A$ in the last item. If $A$ has at least one item both buyers bid
$m$, and ties are broken in favor of $A$.

In this equilibrium, the revenue is $\epsilon m$, which is
arbitrarily lower than $m - 1 + \epsilon$, which is the Walrasian
revenue.
\end{proof}

In contrast to the negative results in the general case, we show
that when the buyers are bidding optimistic conservative, we get a
PoA of $1$, and the revenue the minimal Walrasian revenue. Let the
{\tt max tie-breaking} rule break a tie between a subset of buyers
that bid the same value for a given item $i$ as selecting
$\arg\max_{j\in S} v_{i,j}$. Note that the max tie-breaking has a
different priority for each item.

\begin{theorem}\label{uniq-additive}
For the max tie breaking rule, for any selling order the only pure
optimistic conservative equilibrium is the one in which each item
$i$ is allocated to a buyer in $\arg \max_j v_{j,i}$ and the price
is the second highest valuation.
\end{theorem}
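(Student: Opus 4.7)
The plan is to prove the claim by backward induction on the number of items remaining to be sold, exploiting additivity to decouple each item's auction from the others. The key structural observation is that when every bidder is additive, the continuation subgame after any history depends only on the set of remaining items (and not on who won previously or what they paid), so in particular each buyer's continuation utility is unchanged by the identity of the winner of the current item. This is what lets the standard single-item analysis extend through the induction.

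For the base case I would consider a single first-price auction with valuations $v_i$. Three observations suffice. First, optimistic conservative bidding with $u_{i|\emptyset}=0$ for each non-winner forces $b_j\le v_j$ for every non-winner $j$. Second, in any SPE of a first-price auction the winning bid equals the second-highest bid, for otherwise the winner could shave the bid slightly and still win. Third, no non-winner $j$ can have $v_j$ strictly above the winning bid $b_w$, since such a $j$ would deviate to bid $b_w+\epsilon$ and earn a strictly positive profit. Together these pin down $b_w$ as the second-highest valuation. The max tie-breaking rule then forces the winner to lie in $\arg\max_i v_i$: any strictly higher-valued loser could match the winning bid and take the item by tie-breaking, netting a strictly positive profit.

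For the inductive step, consider the first item $1$ in the selling order. By the inductive hypothesis applied to the continuation subgame after item $1$ (which is itself a sequential auction on the remaining $m-1$ items), every optimistic conservative SPE of that continuation allocates each remaining item $j$ to $\arg\max_i v_{i,j}$ at price equal to its second-highest valuation. By additivity each buyer $i$'s continuation utility is identical whether or not $i$ won item $1$, so the item-$1$ auction is strategically equivalent to a single first-price auction on valuations $v_{i,1}$ plus a constant added to each buyer's payoff (that constant being the continuation utility, which does not affect best responses). Applying the base-case analysis then gives item $1$ to $\arg\max_i v_{i,1}$ at price equal to its second-highest valuation.

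The most delicate point is interpreting the optimistic conservative bound once a nontrivial continuation subgame is present: I would read it in its natural form that winning with the current bid does not leave the bidder with strictly lower total utility than staying on the equilibrium path, so that the cap for a non-winner reduces cleanly to $b_i\le v_{i,1}$ once the equal continuation utilities on the two sides of the inequality cancel out. This cancellation is specific to additivity, which is why the unit-demand result treated later in the paper requires a genuinely more elaborate argument together with a carefully chosen selling order.
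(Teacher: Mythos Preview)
Your proof is correct and follows essentially the same approach as the paper: backward induction on the number of remaining items, using additivity so that the continuation utility after the current item is independent of who wins it, which reduces each stage to a single-item conservative first-price auction. The paper's base case pins down the price by arguing that if $p<v_{s,f}$ then one of the top two bidders would jump to $(p+v_{s,f})/2$, whereas you use the shaving argument to equate the winning bid with the second-highest bid and then squeeze from both sides; these are minor variations on the same idea, and your explicit discussion of why the continuation utilities cancel in the optimistic-conservative bound is a nice addition that the paper leaves implicit.
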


\begin{proof}
The proof is by induction.

\noindent{\bf Base case}
Let $f$ denote the last item being sold. Regardless of the
allocation of the previous items, we have a first price auction for
a single item. Since in the last item optimistic conservative
bidding coincides with conservative bidding, no buyer $j$ can bid
more than $v_{j,f}$. Let $w$ be the buyer that maximizes $v_{j,f}$,
and let $s$ be the buyer with the second highest valuation, that is
\[ s = \argmax_{j \neq w }v_{j,f}\]
Since buyers are conservative, buyer $j$ can not bid over $v_{j,f}$. Therefore, the equilibrium price is at most $v_{s,f}$.

If the equilibrium price is $p < v_{s,f}$, then at least one of $s$
or $w$ would be better-off bidding $\frac{p + v_{s,f}}{2}$ than
their current bid. Therefore, we have $p = v_{s,f}$.

If the price is $v_{s,f}$, then either

\begin{enumerate}
\item Buyer $w$ values $f$ strictly more than $s$, that is $v_{w,f} > v_{s,f}$ and $w$ always buys
\item There are several (more than one buyer) with value $v_{w,f}$ for the item, and one of them always buys
\end{enumerate}
If we assume no ties, then the first option is the only equilibrium.
If ties are possible, then there are several equilibria, but they
are all equivalent (item $f$ has the same price and goes to a buyer
in $\arg\max_j v_{j,f}$ who desires it the most).

\noindent{\bf Inductive step} In the inductive step we assume that
no matter who gets item $f$, for any subsequent item $g$, the buyer
who gets $g$ is one of the buyers who want it the most, i.e.,
$\arg\max_j v_{j,g}$, and would pay the second valuation of $g$.

In this case, the buyer which is allocated $f$ is independent of the
rest of the auction, and an argument very similar to the base case
shows that one of the buyers in $\arg\max_j v_{j,f}$ will be
allocated $f$ and pay the second highest valuation.
\end{proof}

{\bf Remark:} The uniqueness also holds for mixed equilibria, via an
inductive argument. The base case appears in Appendix \ref{app:first}. For the inductive step we note that no matter who wins the item the auction
stays the same, and therefore we invoke the base case again (just like the inductive step in the proof of Theorem \ref{uniq-additive} applies the base case).

\section{Equilibrium Characterization}\label{sec:characterization}

To motivate the definition of optimistic conservative bidding, we show the large variety of possible pure subgame perfect equilibria in sequential first price auctions. To do this, we provide a characterization of this space.

Let $\win : M \rightarrow N$ be an allocation function, and let
$p(1), \ldots p(m)$ be prices. It would be convenient to define the
set of items buyer $i$ wins out of the first $j$ items sold using
the permutation $\pi$ as $x_i(\pi,j)=\{k \;: k\in\{\pi_1,
\ldots,\pi_j\} , \win(k) = i\}$. Let
\[u_i(\win,p) = v_i (x_i(\pi,m)) - \sum_{k\in x_i(\pi,m)} p(k)= \sum_{k\in x_i(\pi,m)} v_{i,k}-p(k)\]
be the utility of buyer $i$ at the end of the auction assuming
everyone played according to $\win$ and $p$. One possible deviation for
buyer $i$ is to play according to the equilibrium up to item $j$,
 to snatch item $j$ by bidding $p(j) + \epsilon$, and to bid zero on all subsequent items. The following theorem shows that there
is an SPE with allocation $\win$ and prices $p$ if and only
if this deviation is not profitable for some permutation $\pi$ of the items. Formally,

\begin{theorem}\label{folks}
Fix an
allocation $\win$ and prices $p$ such that for any buyer $i$ we have
$u_i(\win,p)\geq 0$. Suppose that there exists a permutation $\pi$ (which describes the selling order assuming everyone plays the equilibrium path) such that for every buyer $i$ and for every
item $j$
\begin{eqnarray}
u_i(\win,p) \ge v_i \left(x_i(\pi,j-1)\cup \{\pi_j\}\right) - p(\pi_j) -  \sum_{k\in x_i(\pi,j-1))}
p(k)\;. \label{eq:general-cond}
\end{eqnarray}
Then there is an SPE with allocation $\win$ and prices $p$.

On the other hand, if for every permutation $\pi$ there exists a buyer $i$ and an item $j$ such
that Equation (\ref{eq:general-cond}) does not hold, then there is
no pure SPE with $\win$ and $p$.
\end{theorem}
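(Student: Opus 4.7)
The theorem is an ``if and only if'' characterization, so I would split the proof into two directions and handle each separately.

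\textbf{Sufficiency direction (construct an SPE from the inequality).} The plan is to exhibit an explicit pure SPE using trigger-style strategies. On the equilibrium path the seller sells in the order $\pi$; on item $\pi_j$, buyer $\win(\pi_j)$ bids $p(\pi_j)$ while every other buyer bids strictly less, so $\win$ and $p$ are realized. Off the equilibrium path, as soon as some buyer $i$'s action makes the history diverge from the prescribed play --- in particular, whenever $i$ wins an item she was not supposed to win --- the play enters a punishment continuation targeting $i$: on each remaining item $k$, every buyer other than $i$ bids $v_i(k \mid S_i)$, where $S_i$ is the bundle $i$ has accumulated. Under this punishment $i$ earns exactly zero utility on every remaining item (winning only at her own marginal value, losing otherwise), so her total utility is pinned down by what she accrued up to and including the deviation step. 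The most useful single-step deviation at step $j$ is then exactly the snatching deviation (bid $p(\pi_j)+\varepsilon$ and then bid $0$ forever), whose utility is precisely the right-hand side of~(\ref{eq:general-cond}); by hypothesis this is at most $u_i(\win,p)$, so no profitable deviation exists.

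\textbf{Necessity direction (extract the inequality from an arbitrary SPE).} Suppose some pure SPE implements $\win$ and $p$, and let $\pi$ be the selling order it induces on the equilibrium path. Fix a buyer $i$ and an index $j$ and consider the concrete deviation: $i$ plays the equilibrium strategy on items $\pi_1,\ldots,\pi_{j-1}$, bids $p(\pi_j)+\varepsilon$ on item $\pi_j$, and bids $0$ on every later item. After the snatch, $i$'s payment on items seen so far is $\sum_{k\in x_i(\pi,j-1)} p(k) + p(\pi_j) + \varepsilon$ and her bundle is $x_i(\pi,j-1)\cup\{\pi_j\}$. Bidding $0$ afterwards, her continuation utility is nonnegative: she either loses (paying nothing) or wins items at price $0$, and nonnegative valuations of items obtained for free only add to her total. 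SPE forbids any profitable deviation, so letting $\varepsilon\downarrow 0$ yields exactly~(\ref{eq:general-cond}) at this $(i,j)$. Thus $\pi$ satisfies the condition for every $(i,j)$, contradicting the assumed failure for every permutation.

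\textbf{Main obstacle.} The delicate step is making the punishment subgame itself an SPE. Simply declaring ``every non-deviator bids $v_i(k \mid S_i)$'' is not automatically a best response for the punishers, since the resulting winner may prefer a different bid. I would handle this by defining punishment not as one fixed bidding rule but as \emph{selecting}, within the set of pure SPEs of each remaining subgame (which exist by \cite{LemeST12-soda}), one that minimizes the deviator's continuation payoff, and arguing via backward induction on the number of items left that this minimum payoff is at most zero once the punishers are permitted to coordinate on any feasible bidding profile. Secondary technical points --- consistent tie-breaking so that the announced prices are actually realized, and removing the $\varepsilon$ slack by a standard limiting argument --- are routine once the punishment construction is in place.
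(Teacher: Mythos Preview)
Your necessity direction matches the paper's exactly: the snatch-then-bid-zero deviation at step $j$ has nonnegative continuation and forces~(\ref{eq:general-cond}).

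For sufficiency the overall plan is the same, but two points deserve correction or comparison.

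First, a small but genuine glitch in your on-path play: you have the losers bid \emph{strictly less} than $p(\pi_j)$. In a first-price auction this breaks the winner's best response, since she can shave her bid toward the highest losing bid and still win. The paper avoids this by having \emph{every} buyer bid $p(\pi_j)$ and breaking ties in favor of $\win(\pi_j)$.

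Second, your off-path punishment differs from the paper's, and the paper's choice is exactly what dissolves the obstacle you flag. You punish the specific deviator $i$ by having everyone else bid $i$'s marginal $v_i(k\mid S_i)$; as you note, the punisher who then wins may be paying above her own marginal, so this profile is not itself an SPE, and you also need to specify what happens at histories where several buyers have deviated. The paper instead uses a \emph{deviator-agnostic} punishment: at any off-path state $G$ selling item $j$, every buyer bids $\max_i v_{i,j\mid G}$, with ties broken in favor of the buyer attaining that maximum. This is immediately a one-shot Nash (the winner pays exactly her own marginal and is indifferent; anyone else would overpay by overbidding), and by backward induction from the leaves every off-path continuation gives every buyer zero marginal utility. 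No identification of ``who deviated'' is required, and subgame perfection of the punishment is automatic. Your proposed fix --- select in each subgame a pure SPE minimizing the deviator's continuation and argue it is at most zero --- would succeed, but proving the ``at most zero'' part is precisely what the paper's symmetric construction delivers in one line.
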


\begin{proof}
We present the strategies for the SPE assuming
(\ref{eq:general-cond}) holds where $\pi$ is the identity. Items are
always sold in the order $1, \ldots, m$ regardless of who bought
previous items. On the equilibrium path, at step $j$ (that is when
auctioning item $j$), all buyers bid $p(j)$ and ties are broken in
favor of buyer $\win(j)$. Off the equilibrium path, at some state
$G$ of the game let $X_i(G)$ denote the set of items buyer $i$ has
at state $G$. Given that at state $G$ we are selling item $j$, we
define the marginal valuation of buyer $i$ at state $G$ as,
\begin{equation} \label{marginal}
\vm_{i,j|G} = v_i(\{X_i(G)\} \cup \{j\}) - v_i(\{X_i(G)\})\;.
\end{equation}
In state $G$ each buyer $k$  bids $b_k(G)=\max_i \vm_{i,j|G}$, and ties
are broken in favor the buyer which maximizes (\ref{marginal}).
We prove an inductive claim on the utilities of the buyers in the
off equilibrium part of the game tree:

\begin{claim}
\label{claim:G} Consider an off-equilibrium state $G$ in which item
$j$ is sold. If the buyers play according to the SPE rooted at $G$
no buyer will gain positive utility, and all deviations give either
zero or negative utility.\footnote{Note that the buyers may have gained positive utility before
reaching $G$, we only claim that from $G$ onwards no utility will be
gained.}
\end{claim}

\begin{proof}
The proof is by induction from the leaves to the root. Consider a
leaf state $G$ where item $m$ is sold. Let $k$ be the buyer which
maximizes $\vm_{k,m|G}$. In state $G$ we have a first price auction
for item $m$, where buyer $k$ wins the item and pays $\vm_{k,m|G}$,
which gives her zero utility. If buyer $k$ bids less, she does not
get the item (as everyone else bids $\vm_{k,m|G}$) and therefore has
zero utility, and if she bids more she has negative utility. For any
buyer $r \neq k$, underbidding does not affect the outcome, and
overbidding buyer $k$ would give negative marginal utility.
\end{proof}

Now we need to prove the inductive step. By induction hypothesis all
the children vertices of state $G$ give zero utility to all the
buyers. Similar to the leaf case, that there is no local action
which gives a positive utility in state $G$ to any buyer.

If Condition (\ref{eq:general-cond}) does not hold for any
permutation, then there is no SPE, since at least one of the buyers
would deviate.
\end{proof}

We present two surprising corollaries of this theorem. We view this weird equilibria as more evidence that one must refine the set of possible equilibria in a sequential first price auction, and not as plausible outcomes.

Define {\em  non-singleton } valuation, where $v_i(\{j\})=0$ for
any $j\in M$. Note that this includes as a special case, single
minded buyers with sets of size at least two.

\begin{corollary} \label{non-single}
Assume that we have $n$ buyers who have non-singleton valuations
then there is a sub-game perfect equilibrium, where all
the prices are zero and one buyer receives all items.
\end{corollary}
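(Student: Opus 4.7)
The natural approach is to apply Theorem~\ref{folks} directly: that theorem reduces SPE-existence to a single ``snatch one item'' deviation check, and the non-singleton hypothesis $v_i(\{j\})=0$ was tailor-made to kill exactly that deviation. I would pick any buyer, call them buyer $1$, and declare the candidate allocation and prices by $\win(j) = 1$ and $p(j) = 0$ for every item $j$; the selling order $\pi$ can be arbitrary (say the identity). Equilibrium utilities are then $u_1(\win, p) = v_1(M) \ge 0$ and $u_i(\win, p) = 0$ for $i \neq 1$, so the non-negativity precondition of Theorem~\ref{folks} holds.

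The remaining work is verifying inequality (\ref{eq:general-cond}) for every buyer $i$ and every item index $j$. For a non-winner $i \neq 1$, the prefix $x_i(\pi, j-1)$ is empty, so the right-hand side collapses to $v_i(\{\pi_j\}) - p(\pi_j) = 0$ by the non-singleton assumption, matching $u_i(\win, p) = 0$ with equality. For buyer $1$, the prefix $x_1(\pi, j-1)$ equals $\{\pi_1, \ldots, \pi_{j-1}\}$ and every price is zero, so the right-hand side simplifies to $v_1(\{\pi_1, \ldots, \pi_j\})$; by monotonicity of the valuation this is at most $v_1(M) = u_1(\win, p)$. Both checks together invoke Theorem~\ref{folks} to yield the desired SPE.

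There is no genuine obstacle: the characterization in Theorem~\ref{folks} already does all the heavy lifting, and the corollary becomes a one-line application of the non-singleton hypothesis to the snatch deviation of non-winning buyers. The only point that merits mention is the appeal to monotonicity for buyer~$1$'s check — without it one would have to choose the winning buyer and ordering $\pi$ more carefully to ensure partial-bundle values stay below $v_1(M)$ — but all valuation classes considered in this paper are monotone, so this is automatic.
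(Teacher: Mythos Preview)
Your proposal is correct and is exactly the intended argument: the paper presents this statement as an immediate corollary of Theorem~\ref{folks} (the proof is deferred to the full version), and your verification of condition~(\ref{eq:general-cond}) for the all-to-one, zero-price allocation is precisely how that application goes. Your remark on needing monotonicity of $v_1$ for the winner's check is the only point worth flagging, and you handle it appropriately.
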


The proof of this corollary is simple, and appears in the full version.

In contrast to the previous example, where the prices were very low, one can build equilibria in which the prices are high,
 and the auctioneer gets a lot of revenue. Consider a market with sub-modular buyers. Let
$\pi$ be the following selling order, defined inductively. Let $i_1,
j_1$ the buyer and item that maximize $\max_{i,j} v_i(\{j\})$. Now at
step $k$, let $i_k, j_k$ be the pair of buyer and item who maximize
\[\vm_{i,j|G} = v_i(\{X_i(G)\} \cup \{j\}) - v_i(\{X_i(G)\} )\]
where $G$ is the state of the game where buyer $i_r$ wins item $j_r$ for every
$r < k$.

\begin{corollary}\label{high-revenue-submodular}
Assume all buyers have a submodular valuation.
If the items are sold according to $\pi$, there is an SPE which elicits at least half
the optimal social welfare as revenue. Moreover, in this SPE the players have no utility, and all the welfare goes to the auctioneer.
\end{corollary}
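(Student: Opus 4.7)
The plan is to invoke Theorem \ref{folks} with the allocation and prices produced by the very greedy procedure that defines $\pi$, and then to lower-bound the extracted revenue by appealing to the classical $1/2$-approximation guarantee of the greedy algorithm for submodular social welfare maximization.

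First I would set $\win(j_k)=i_k$ and $p(j_k)=\vm_{i_k,j_k|G_k}$, where $G_k$ is the state reached after the first $k-1$ greedy picks. By construction every winner is charged exactly her marginal gain at the moment she buys an item, so for each buyer $i$ her cumulative payment telescopes against her cumulative value:
\[
\sum_{r \in x_i(\pi,j-1)} p(r) \;=\; v_i\bigl(x_i(\pi,j-1)\bigr),
\qquad
\sum_{r \in x_i(\pi,m)} p(r) \;=\; v_i\bigl(x_i(\pi,m)\bigr).
\]
In particular $u_i(\win,p)=0$ for every buyer $i$.

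With these identities in hand, condition (\ref{eq:general-cond}) at position $j$ for buyer $i$ reduces to
\[
p(\pi_j) \;\ge\; v_i\bigl(x_i(\pi,j-1)\cup\{\pi_j\}\bigr) - v_i\bigl(x_i(\pi,j-1)\bigr).
\]
The right-hand side is precisely the marginal value of $\pi_j$ to buyer $i$ in state $G_j$, while by the greedy rule $p(\pi_j)$ was selected as the \emph{maximum} marginal value over all buyer--item pairs available at step $j$, and in particular over the pair $(i,\pi_j)$. Hence the inequality holds for every $(i,j)$, so Theorem \ref{folks} delivers a pure SPE with this allocation and these prices. Since every utility is zero, the auctioneer collects the entire social welfare produced by the greedy allocation: $\text{revenue} = \sum_k p(j_k) = \sum_i v_i\bigl(x_i(\pi,m)\bigr)$.

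It remains to compare this welfare to the optimum. The procedure defining $\pi$ is exactly the Lehmann--Lehmann--Nisan greedy algorithm for submodular combinatorial welfare (at every step commit the buyer--item pair of largest current marginal value), which is known to be a $1/2$-approximation. Therefore the revenue is at least $\mathrm{OPT}/2$, which is the corollary. The one delicate step is the telescoping bookkeeping: it is what simultaneously forces every buyer's utility to zero and rewrites the somewhat opaque SPE condition (\ref{eq:general-cond}) as a clean inequality about marginal values, after which the SPE verification reduces to the definition of the greedy rule and the revenue bound reduces to a standard approximation result.
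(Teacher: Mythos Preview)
Your proposal is correct and follows essentially the same route as the paper: invoke Theorem~\ref{folks} with the greedy allocation and marginal-value prices, observe that every buyer's utility is zero, and then appeal to the $1/2$-approximation of greedy for submodular welfare. Your telescoping argument is in fact more explicit than the paper's own verification of condition~(\ref{eq:general-cond}); the paper simply asserts that any buyer $i\neq i_j$ would obtain negative utility by outbidding, whereas you reduce the condition cleanly to the defining inequality of the greedy step.
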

This corollary is proven in Appendix \ref{app-opt}.
\section{Unit Demand valuations}

\subsection{Walrasian Equilibrium for Unit Demand Buyers}\label{sec:Walras}

Walrasian equilibrium exists when all buyers are gross substitute. We focus in this section on Walrasian equilibrium when all the buyers are unit demand, to enable us to state the order in which we sell the items. We begin with the definition of supporting a price. Consider a Walrasian equilibrium
defined by the allocation $\win$ and a price vector $p_1, \ldots p_m$.
\begin{definition}
Buyer $i$ supports the price of item $j$ if $\win(j)\neq i$ and either: (1)
there is an item $k$ such that $\win(k)=i$ and $v_{i,k}-p_k =
v_{i,j}-p_j$, (2) otherwise (buyer $i$ gets no item) $v_{i,j}=
p_j$.
\end{definition}
Note that the definition of support almost implies conservativeness.
Next we define a minimal Walrasian Equilibrium.
\begin{definition}
A minimum Walrasian Equilibrium is a set of price $p_1, \ldots ,
p_m$ such that for any other Walrasian equilibrium $p_1', \ldots ,
p_m'$, we have $p_i\leq p_i'$ for any item $i$.
\end{definition}

We show that there exists a minimum Walrasian equilibrium when the buyers
are unit demand.\footnote{This is true in general but not necessary for this work.}

\begin{theorem}
For unit demand buyers there always exist minimum Walrasian
prices.
\end{theorem}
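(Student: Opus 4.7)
My plan is to establish existence by showing that the set $\mathcal{W}$ of Walrasian equilibrium price vectors is closed under coordinate-wise minimum, and then using compactness to extract a minimum element. Because unit-demand valuations satisfy the gross substitutes property, the standard assignment-model existence result guarantees $\mathcal{W} \neq \emptyset$, and since Walrasian prices cannot exceed $V := \max_{i,j} v_{i,j}$, we have $\mathcal{W} \subseteq [0, V]^m$.

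The heart of the argument is the meet-closure claim: for any $p, p' \in \mathcal{W}$, the vector $q$ with $q_j := \min(p_j, p_j')$ again belongs to $\mathcal{W}$. To prove this, I would construct an allocation supporting $q$ by analyzing the bipartite demand graph $H$ at prices $q$, in which each buyer $i$ is connected to every item $j$ in her demand set at $q$, together with a dummy vertex when her outside option is optimal. A supporting allocation exists iff $H$ admits a matching that saturates all items, which by Hall's theorem reduces to a neighborhood bound. I would verify the Hall condition by partitioning any item set $T$ into $T_p := \{j \in T : q_j = p_j\}$ and $T_{p'} := T \setminus T_p$, using the $\win$ allocation on $T_p$ and $\win'$ on $T_{p'}$, and then stitching the two partial matchings together via augmenting paths in $H$.

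With meet-closure in hand, existence of a minimum follows by compactness. Set $p^*_j := \inf\{p_j : p \in \mathcal{W}\}$. Since $\mathcal{W}$ is closed (being defined by finitely many weak inequalities ranging over finitely many candidate allocations) and bounded, and is closed under binary meet, one can extract a sequence $p^{(k)} \in \mathcal{W}$ converging to $p^*$ coordinate-wise, and closedness forces $p^* \in \mathcal{W}$.

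The main obstacle I expect is verifying Hall's condition during meet-closure. The subtlety is that lowering prices on only some items induces demand cascades: a buyer previously assigned item $j$ by $\win$ may now strictly prefer some item $j'$ whose price has dropped from $p_{j'}$ to $q_{j'} = p'_{j'}$, displacing yet another buyer, and so on. Resolving these cascades is an augmenting-path / exchange argument, and it succeeds precisely because the gross substitutes structure of unit demand prevents conflicting cycles; lowering prices on one set of items never shrinks demand for items whose prices are unchanged, which keeps each augmenting step feasible at prices $q$.
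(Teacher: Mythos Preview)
Your proposal is correct and follows the same high-level strategy as the paper---show that the set of Walrasian price vectors is closed under coordinate-wise minimum, then extract a minimum element---but the execution differs substantially. The paper simplifies by assuming a \emph{unique} welfare-maximizing allocation $\win$; under that assumption, both $p^1$ and $p^2$ support the same $\win$, and the meet-closure argument becomes a two-line case analysis: if buyer $i$ with $\win(k)=i$ prefers some item $j$ at the meet prices, then depending on whether $p_j$ came from $p^1$ or $p^2$, the same deviation would already have been profitable at one of the original equilibria. No new allocation needs to be constructed, no Hall condition, no augmenting paths.

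Your route is the honest one when the optimal allocation is not unique: then $p$ and $p'$ may support different allocations, and one genuinely has to \emph{build} an allocation for $q = p \wedge p'$, which is exactly the Hall / augmenting-path argument you sketch (this is the classical lattice-of-equilibria result for assignment markets). You also spell out the compactness step that the paper leaves implicit. So your argument buys generality and rigor at the cost of considerably more machinery; the paper's argument buys brevity at the cost of a genericity assumption it does not discharge.
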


\begin{proof}
It is well known that Walrasian prices maximize the social welfare.
For simplicity, assume there is a unique allocation $\win$ that maximizes the social welfare.
Assume that $p^1_1, \ldots , p^1_m$ and $p^2_1, \ldots , p^2_m$ are
two Walrasian prices for the allocation $\win$. Define $p$ as
$p_i=\min\{p^1_i,p^2_i\}$. We show that $p$ is also a Walrasian
equilibrium.

Assume some buyer $i$ would like to deviate from $\win$ when the
prices are $p$. Let $k$ be the item allocated to buyer $i$, i.e.,
$\win(k)=i$ and assume that $p_k=p^1_k$. After the deviation buyer
$i$ prefers item $j$. If the price of item $j$ in $p$ and $p^1$ are
equal, buyer $i$ can deviate in $p^1$, contradicting that $p^1$ is
equilibrium prices. If the price of item $j$ in $p$ and $p^2$ are
equal, then the utility of buyer $i$ from item $k$ in $p$ is at
least as large as in $p^2$, and hence it would have benefited at
least as much in the deviation in $p^2$, contradicting that $p^2$ is
equilibrium prices. Therefore $p$ and $\win$ are a Walrasian
equilibrium. Therefore, there exists a minimal vector of prices.
\end{proof}

\begin{claim}
\label{claim:Wals_min}
A minimum Walrasian equilibrium has for each
item $j$ with a strictly positive price ($p_j>0$) a buyer $i$ that
supports it.
\end{claim}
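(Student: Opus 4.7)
The plan is to prove the contrapositive by a small price perturbation argument: if no buyer supports item $j$, then every inequality in the Walrasian conditions involving $p_j$ holds strictly, so I can lower $p_j$ by a tiny amount while preserving the equilibrium, contradicting minimality.

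More concretely, I would fix a minimum Walrasian equilibrium $(\win, p)$ and assume for contradiction that some item $j$ has $p_j > 0$ yet no buyer supports it. I would then translate ``$i$ does not support $j$'' for each buyer $i \neq \win(j)$ into a strict inequality: if $i$ is allocated item $k = \win^{-1}(i)$, then $v_{i,k} - p_k > v_{i,j} - p_j$ (since buyer $i$ does best-response but is not indifferent between $k$ and $j$); if $i$ is unallocated, then $v_{i,j} < p_j$ (since $i$ satisfies $v_{i,j} \leq p_j$ but not with equality). These strict inequalities, together with $p_j > 0$, leave positive slack.

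Next, let $\epsilon > 0$ be the minimum of $p_j$, of $(v_{i,k} - p_k) - (v_{i,j} - p_j)$ over all allocated $i \neq \win(j)$, and of $p_j - v_{i,j}$ over all unallocated $i$. Define $p'$ by $p'_j = p_j - \epsilon$ and $p'_\ell = p_\ell$ for $\ell \neq j$. I would then verify the Walrasian conditions for $(\win, p')$: for any $i \neq \win(j)$, the slack I extracted guarantees that $i$ still weakly prefers their current bundle to item $j$ (and to all other items, whose prices are unchanged); for $\win(j)$, decreasing $p_j$ only raises utility from item $j$ and leaves the comparisons to other items at least as favorable, and $v_{\win(j),j} - p'_j > v_{\win(j),j} - p_j \geq 0$. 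So $(\win, p')$ is a Walrasian equilibrium with $p'_j < p_j$, contradicting the minimality of $p$.

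The main obstacle is really just bookkeeping: I must carefully handle the two cases in the definition of support (buyers who received an item vs.\ unallocated buyers), ensure I don't accidentally violate individual rationality for $\win(j)$ or any other allocated buyer whose item's price is unchanged, and keep $p'_j \geq 0$. No deep idea is needed beyond the observation that ``support'' is precisely the statement that the constraint $p_j \geq (\text{something})$ is tight.
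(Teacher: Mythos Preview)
Your proposal is correct and follows essentially the same argument as the paper: assume no supporter, compute the minimum slack $\epsilon$ over all buyers $i\neq\win(j)$, lower $p_j$ slightly, and observe that all Walrasian conditions are preserved, contradicting minimality. The paper writes the slack uniformly as $u_i - (v_{i,j}-p_j)$ (which covers both the allocated and unallocated cases at once) and decreases by $\epsilon/2$, whereas you separate the two cases and decrease by $\epsilon$; you are also slightly more careful in explicitly keeping $p'_j\ge 0$, but these are cosmetic differences.
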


\begin{proof}
Assume that there is an item $j$ with $p_j>0$ and no supporter. Let
$u_i$ be the utility of buyer $i$ in the equilibrium. Let
$\epsilon=\min_{i\neq win(j)} u_i-(v_{i,j}-p_j)$. Since item $j$
does not have any supporter $\epsilon>0$. Consider an $\epsilon/2$
decrease in the price of item $j$. Buyer $\win(j)$ definitely still
prefers item $j$. Any other buyer $i$, since it was not a supporter
of item $j$, then after the $\epsilon/2$ decrease it still does not
prefer item $j$. Therefore we showed that the original Walrasian
equilibrium is not minimum.
\end{proof}

We can now define the order of the support in a minimum Walrasian
equilibrium.

\begin{definition}
A {\em support order} of a Walrasian equilibrium is a permutation
$\pi$ of the items, such that for any item $j$ the buyer $i$ that
supports it receives a later item or no item.
\end{definition}

\begin{theorem}
A minimum Walrasian equilibrium has a support order $\pi$.
\end{theorem}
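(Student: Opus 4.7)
The plan is to build the support order $\pi$ greedily from the back, filling positions $\pi_m, \pi_{m-1}, \dots, \pi_1$ in turn. At each stage, with remaining set $J \subseteq M$ of unplaced items, call $j \in J$ a \emph{sink} if $p_j=0$ (so the support-order condition is vacuous for $j$), or if $j$ admits a supporter who gets no item, or if $j$ admits a supporter whose won item lies outside $J$ (i.e., has already been placed). Placing a sink in the next available position respects the support-order condition for that item, since its witnessing supporter is either unconstrained or already scheduled later. So the proof reduces to showing that some sink always exists inside the current $J$.

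Equivalently, I must rule out a \emph{blocking set}: a set $J \subseteq M$ in which every $j\in J$ satisfies $p_j>0$ and every supporter of $j$ is a buyer whose won item also lies in $J$. I will show no such $J$ exists, and the main tool is the minimality of the Walrasian prices.

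Suppose for contradiction that $J$ is blocking, and consider the perturbed price vector $p'$ with $p'_j = p_j - \varepsilon$ for $j\in J$ and $p'_j = p_j$ otherwise, for a small $\varepsilon>0$. I will argue that $p'$ together with the same allocation $\win$ is still a Walrasian equilibrium, contradicting minimality since $p'_j < p_j$ for $j \in J$. A winner $i = \win(j)$ with $j\in J$ only gains utility, and her preference for $j$ over any other item is preserved (items in $J$ drop by the same $\varepsilon$ and items outside $J$ are unchanged). The two delicate cases are (i) a buyer $i$ who wins some item $k \notin J$ and must still prefer $k$ to every $j\in J$, and (ii) a buyer $i$ who gets no item and must still satisfy $v_{i,j}\le p'_j$ for every $j\in J$. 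In both cases a strict inequality in the original survives for small $\varepsilon$; an equality would make $i$ a supporter of $j$ that either wins an item outside $J$ (case i) or wins nothing (case ii), directly contradicting the blocking property of $J$.

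The main obstacle is formulating the notion of sink so that zero-price items and no-item buyers are both absorbed correctly, and then checking that the perturbed prices remain Walrasian for the \emph{same} allocation. Once that dichotomy is set up, the perturbation argument leverages exactly the support equalities guaranteed by blocking to force the contradiction with minimality, and the backward greedy construction yields the support order.
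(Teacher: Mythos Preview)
Your proof is correct and follows essentially the same approach as the paper: both build the permutation from the end, identify the obstruction to progress as a set of remaining positive-price items all of whose supporters are ``internal,'' and then uniformly lower the prices on that set to contradict minimality of the Walrasian prices. Your sink/blocking-set language is a clean repackaging of the paper's argument, which tracks the dual set $D$ of already-handled buyers rather than your set $J$ of unplaced items, but the perturbation and the case analysis are the same.
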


\begin{proof}
We build the permutation $\pi$ from the end to the start. The last
items are items that are priced at zero (if there are any) in an
arbitrary order. All the remaining items have strictly positive
prices, and therefore have at least one supporter (by
Claim~\ref{claim:Wals_min}).

Let $D$ be the set of buyers that either buy at price zero or do not
buy and $I$ be the set of items they buy. If there is an item $i$
such that $\win(i)\not\in D$ and $support(i)\in D$, we add item $i$
to $I$ and to the permutation $\pi$, and add $\win(i)$ to $D$. The
process can terminate either if we exhaust all the items, in which
case we have a permutation $\pi$, or the set of remaining items do
not have supporters in $D$, i.e., for any item  $i\not\in I$ and any
supporter we have $support(i)\not\in D$.
We will show that such an event will contradict the fact that we
have a Walrasian equilibrium.

In such a case for any $j\in D$ we have that $v_{j,k}-p_k >
v_{j,i}-p_i$ for the item $k$ buyer $j$ buys ($\win(k)=j$) and any
item $i\not\in I$. Let
$$\epsilon=\min_{j\in D,i\not\in I, win(k)=j } (v_{j,k}-p_k) -
(v_{j,i}-p_i) $$
We are guarantee that $\epsilon>0$. Consider
reducing the prices of all items not in $I$ by $\epsilon/2$. The
preference between items not in $I$ by buyers not in $D$ does not
change. No buyer in $D$ will prefer an item not in $I$, by definition
of $\epsilon$. A buyer not in $D$ will not prefer an item in $I$
since he has an item he prefers not in $I$ and therefore after
lowering the price his preference only strengthen.

This results in an Walrasian equilibrium with lower prices, which
contradict that we started with a minimum Walrasian equilibrium.
\end{proof}

\subsection{Walrasian sub-game prefect equilibrium ({\tt Unit-Wlrs-Eq})}

In this section we describe the strategy {\tt Unit-Wlrs-Eq}. We
basically define the tree strategy by defining for each sub-tree the
on-equilibrium strategy. This defines the strategy in each node. (We
need to define for each node which item is sold and what are the
bids.)

Given the set of all items $I$ we compute the minimum Walrasian
equilibrium and its support order. The on-equilibrium strategy is
the support order of the items. Let $u_i$ be the utility of buyer
$i$ in the Walrasian equilibrium, using the marginal utilities. The
bid of buyer $i$ on item $j$ on the equilibrium path is
$v_{i,j}-u_i$, if $i$ did not win an item yet and $0$ otherwise.
Note the winning buyer and the supporting buyer both bid the
Walrasian equilibrium price.

We now complete the off-equilibrium path strategy. If a non-winner
buyer bids below his expected bid
or the winner buyer bids above his expected bid
we stay on the equilibrium path (the allocation did not change). If either the winner bids
below the price (losing the item, which we call {\em underbidding}) or another bidder bids higher than
the price (winning the item, which we call {\em overbidding}) we basically construct recursively a
strategy for the remaining items.

Before we complete the off-equilibrium strategy, we first define the
residual valuation of a buyer. Given the already allocated items $S$
we define for each buyer a marginal valuation,
\[\vp_{i,j|S} = \max(0,v_{i,j} - \max_{j\prime \in S, \ win(j\prime) = i} v_{i, j\prime})\]

Give the set of items already sold, we compute the minimum Walrasian
equilibrium $W_{I-S}$ on $I-S$ with valuation $\vp_{i,j|S}$, where
$\win^{I-S}$ is the allocation $p^{I-S}$ are the prices.
Let $u_{i|S}$ be the utility of agent $i$ in $W_{I-S}$ with respect
to  $\vp_{i,j|S}$. (This is essentially the marginal utility from
the item he receives in $W_{I-S}$.)
Give $W_{I-S}$ we compute the support order of $W_{I-S}$ and sell
items in that order. The bid of buyer $i$ which has already received
his item in $W_{I-S}$ is zero. The bid of buyer $i$ which has not
received yet his item in $W_{I-S}$ is $v_{i,j|S}-u_{i|S}$.
Again, note that the winner and supporter of item $j$ bid the new
Walrasian price, i.e., $p^{I-S}_j$.

We continue this recursive process until we define the entire
strategy tree. In Section \ref{sec:proof} We prove that this is indeed an equilibrium

\begin{theorem} \label{thm:equilibrium}
The resulting strategy is sub-game perfect.
\end{theorem}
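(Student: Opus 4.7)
The plan is a backward induction on $k=|I-S|$, strengthening the SPE claim to a single invariant: \emph{at every state $S$, on- or off-equilibrium path, when every buyer follows the prescribed {\tt Unit-Wlrs-Eq} strategy the continuation payoff of buyer $i$ from $S$ onward equals $u_{i|S}$ as computed from the minimum Walrasian equilibrium $W_{I-S}$ on residual valuations.} Given this invariant, by the one-shot deviation principle it suffices to show that at every state $S$ the prescribed bids form a Nash equilibrium of the single-round first-price auction for the item $j$ that heads the support order of $W_{I-S}$, where the continuation payoff from each resulting child state $S\cup\{j\}$ is pinned down by the induction hypothesis.

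\textbf{Two structural lemmas on minimum Walrasian equilibria.} The link between $W_{I-S}$ and $W_{I-S\cup\{j\}}$ is the heart of the argument, and I would isolate it into two claims. First, a \emph{consistency} lemma: if the Walrasian winner $w$ of $j$ takes $j$, then in $W_{I-S-\{j\}}$ under the updated residual valuations every other buyer $i\neq w$ is allocated the same item at the same price as in $W_{I-S}$, so $u_{i|S\cup\{j\}}=u_{i|S}$ while $u_{w|S\cup\{j\}}=0$. The support order is built so that the supporter of $j$ receives his own item strictly later, which is precisely what guarantees that the residual market stays in equilibrium after removing the pair $(w,j)$. Second, a \emph{monotonicity} lemma: if some other buyer $b\neq w$ takes $j$ off-path, then in the continuation $u_{w|S\cup\{j\}}^{(b)}\le u_{w|S}$, since $w$'s preferred Walrasian option has disappeared from the market and his best fallback, by Walrasian best-response in $W_{I-S}$, yields weakly less utility.

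\textbf{Base case and inductive verification.} The base case $k=1$ is a standard single-item first-price auction, essentially identical to the base step of Theorem \ref{uniq-additive}. For the inductive step at state $S$, by the definition of the prescribed bids the winner $w$ and supporter $s$ of $j$ both bid exactly the Walrasian price $p^{I-S}_j$, while every non-supporter $i$ bids $v_{i,j|S}-u_{i|S}<p^{I-S}_j$ (the strict inequality coming from the fact that a non-supporter strictly prefers his own Walrasian item, not $j$); max tie-breaking toward $w$ implements the prescribed outcome. I then check each buyer's incentive at $S$. If $w$ underbids he loses $j$ to $s$; by monotonicity and the induction hypothesis his total continuation payoff is at most $u_{w|S}$. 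If $s$ overbids by $\epsilon$ to snatch $j$ he earns immediate marginal utility $v_{s,j|S}-p^{I-S}_j-\epsilon=u_{s|S}-\epsilon$, and his continuation contribution cannot exceed $u_{s|S}$'s counterpart once $j$ is already his, so he strictly loses relative to the consistency payoff $u_{s|S}$ he collects by staying on path. Any other bidder $i$ can only grab $j$ by bidding strictly above $v_{i,j|S}-u_{i|S}$, leaving him strictly below the invariant payoff he receives on path; overbidding his own prescribed bid without winning is inert.

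\textbf{Main obstacle.} The bulk of the work lies in proving the consistency and monotonicity lemmas for minimum Walrasian prices with unit-demand bidders. Consistency is the analog of the classical pivotal characterization of minimum Walrasian prices for unit demand: they are determined by Hall-style tight-set arguments on the preferred-items graph, and removing the pair $(w,j)$ at the top of the support order does not collapse any of the tight sets that pin down the other prices. Monotonicity is more subtle because an off-path winner $b\neq w$ can re-price the residual market, and the argument requires a careful exchange-and-augment step: any allocation of the reduced market to $w$ can be lifted to a feasible allocation in $W_{I-S}$ that excludes $j$, whose utility for $w$ is by the Walrasian best-response at most $u_{w|S}$. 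Both lemmas are standard in spirit but need to be executed carefully, and it is in their proofs, rather than in the Nash verification, that the technical content of the theorem resides.
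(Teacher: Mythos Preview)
Your high-level architecture is the same as the paper's: backward induction on the number of remaining items, a base case that is a single-item first-price auction, and an inductive step that rules out underbidding by the Walrasian winner and overbidding by everyone else using a comparison between $W_{I-S}$ and the recomputed $W_{I-S\cup\{j\}}$. Your consistency lemma is something the paper uses implicitly without isolating, and your monotonicity lemma is exactly the paper's Lemma~\ref{lem:f_prefer} (together with its overbidding twin).

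The gap is in your sketch of the monotonicity lemma. You write that ``any allocation of the reduced market to $w$ can be lifted to a feasible allocation in $W_{I-S}$ that excludes $j$, whose utility for $w$ is by the Walrasian best-response at most $u_{w|S}$.'' This compares $w$'s utility from the alternative item $g$ \emph{at the old prices} $p^{I-S}$, but that is not what you need: the off-path continuation reprices the market, and the danger is precisely that $g$ becomes cheaper in the new minimum Walrasian equilibrium. Walrasian best-response in $W_{I-S}$ says nothing about $v_{w,g|S}-p^{I-S\cup\{j\}}_g$. An exchange-and-augment argument on allocations alone cannot close this, and ``standard in spirit'' undersells the difficulty.

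The paper's proof of Lemma~\ref{lem:f_prefer} is a counting argument, not an exchange argument. One assumes some item's price dropped by $\delta>0$, takes $D$ to be a maximal set of items whose prices dropped by at least $\delta$, and shows that the set $B(D)=\bigl(\{\win(f)\}\cup\{\supp(x):x\in D\}\cup\{\win(x):x\in D\}\bigr)\setminus\{\taker\}$ has size at least $|D|+1$ and that every buyer in $B(D)$ must, in the new equilibrium, buy an item whose price dropped by at least $\delta$. Since only $|D|$ such items exist, this is a contradiction. The support order is used twice here and is essential: it guarantees that the supporter of the last-sold item of $D$ is not the winner of any item in $D$ (giving $|D|+1$ distinct buyers among winners and supporters), and that $\win(f)$ is neither a winner nor a supporter of any item in $D$ (since $f$ is sold first). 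Your proposal does not invoke the support order in the monotonicity argument at all, yet it is exactly the property of that order that makes the lemma go through.
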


The equilibrium strategies are optimistic conservative by construction.

\subsection{Proof of Theorem \ref{thm:equilibrium}}\label{sec:proof}

The proof is by induction on the height of the node in the strategy
tree from the leaf (which is also equal to the number of items
left). The following is the inductive claim.

\begin{claim}
\label{claim:SGP:ind}
Every subtree of height $\ell$ (from the
leaves) defines a subgame perfect equilibrium.
\end{claim}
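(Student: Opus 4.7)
My plan is to prove Claim \ref{claim:SGP:ind} by induction on the subtree height $\ell$. For the base case $\ell = 1$, only one item $j$ remains, and the strategy reduces to a single-item first-price auction at the minimum Walrasian price $p_j^{I-S}$ of the residual singleton market. Since optimistic conservative bidding coincides with ordinary conservative bidding for a single item, the argument of Theorem \ref{uniq-additive} carries over: no buyer can profitably overbid beyond his residual valuation, and the designated winner $w$ strictly prefers winning at $p_j^{I-S}$ to losing.

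For the inductive step at a node of height $\ell + 1$ with already-allocated set $S$, let $j$ be the first item in the support order of $W_{I-S}$, let $w = \win^{I-S}(j)$, and let $s$ be the supporter of $j$ (if $p_j^{I-S} > 0$; otherwise the argument simplifies). The inductive hypothesis guarantees that every child subtree is already subgame perfect, so I only need to rule out profitable one-step deviations at the current node. These fall into three cases: (i) $w$ overbids $p_j^{I-S}$, which is strictly worse since $w$ still wins but pays more; (ii) $w$ underbids so that $s$ wins $j$ at $p_j^{I-S}$; (iii) some $b \neq w$ overbids to win $j$ at $p_j^{I-S} + \delta$ for $\delta > 0$. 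The bids of the other players are pinned down by the optimistic conservative restriction to $\vp_{i,j|S} - u_{i|S}$, which is precisely why only these three deviations can possibly change the winner of $j$.

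The central tool, which I would isolate as an auxiliary lemma, is the following consistency property of minimum Walrasian equilibria along the support order: for any buyer $c$ and any price $p \geq p_j^{I-S}$, if item $j$ is assigned to $c$ at price $p$ and the minimum Walrasian $W_{I-(S\cup\{j\})}$ is recomputed with the updated residual valuations $\vp_{\cdot|S\cup\{j\}}$, then the sum of $c$'s immediate payoff $\vp_{c,j|S} - p$ and his continuation Walrasian utility $u_{c|S\cup\{j\}}$ is at most $u_{c|S}$, and $u_{i|S\cup\{j\}} \leq u_{i|S}$ for every $i \neq c$, with equality throughout when $c = w$ and $p = p_j^{I-S}$. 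Granted this lemma, deviations (ii) and (iii) cannot improve on the equilibrium utility $u_{i|S}$: in (ii) the deviator $w$ receives only $u_{w|S\cup\{j\}} \leq u_{w|S}$ from the continuation, while in (iii) the strict overpayment $\delta > 0$ yields total utility at most $u_{b|S}$, matching the equilibrium payoff. The hard part is establishing this consistency lemma, since the residual valuations $\vp_{i,k|S\cup\{j\}}$ do not simply restrict the original market but reshape it; I expect the proof to lean on the indifference of the supporter $s$ in $W_{I-S}$ between his allocated item and $j$, combined with the interpretation of minimum Walrasian prices for unit-demand markets as VCG payments, which makes the telescoping of per-buyer utilities across the sequence of residual Walrasian problems transparent.
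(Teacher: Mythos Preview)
Your inductive skeleton and the three-case split at a height-$\ell$ node match the paper exactly, and you have correctly located the crux: one must compare each buyer's utility under the original minimum Walrasian $W_{I-S}$ to his utility in the recomputed minimum Walrasian $W_{I-(S\cup\{f\})}$ after $f$ is handed to a non-winner. The paper packages this as Lemma~\ref{lem:f_prefer} (for underbidding) together with a parallel argument for overbidding.

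Where your proposal diverges is in \emph{how} that comparison is established. You propose a single consistency lemma, to be proved by appealing to the supporter's indifference and the identification of minimum Walrasian prices with VCG payments, and then ``telescoping'' utilities across the chain of residual markets. The paper does something more combinatorial and more tightly tied to the support order. It argues by contradiction: suppose the relevant buyer obtains, in the new equilibrium $(\nwin,\np)$, an item whose price dropped by some $\delta>0$ relative to $\pp$. Take a maximal set $D$ of items whose new price is at least $\delta$ below the old one, and form
\[
B(D)=\bigl(\{\text{the deviating/original winner}\}\cup\{\win(x):x\in D\}\cup\{\supp(x):x\in D\}\bigr)\setminus\{\taker\}.
\]
One then shows (i) every buyer in $B(D)$ must, in $\nwin$, have bought an item whose price dropped by at least $\delta$, and (ii) $|B(D)|\geq |D|+1$. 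These two together are impossible. The count in (ii) is exactly where the support order earns its keep: because items are sold in support order, $\win(f)$ cannot be the supporter of any later item, and the last item of $D$ in the order has a supporter who wins no item in $D$; this is what forces the extra element.

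Your VCG/telescoping route may well be made to work, but as written it leaves the role of the support order invisible, and your lemma's blanket clause ``$u_{i|S\cup\{j\}}\le u_{i|S}$ for every $i\neq c$'' is stronger than what either case needs (only $i=w$ matters in (ii) and only $i=b$ matters in (iii)); proving it in that generality would require additional argument, since assigning $j$ to $c$ both removes an item and weakens $c$ as a competitor, effects that pull in opposite directions for third parties. If you pursue the VCG line, you should isolate precisely where the support order enters---without it Lemma~\ref{lem:f_prefer} is false, as the bad unit-demand example of \cite{LemeST12-soda} witnesses.
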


we start by proving the base of the inductive claim
(Claim~\ref{claim:SGP:ind}).

\begin{claim}
The leaves of the strategy tree (subtrees of height $1$) define an
equilibrium strategy.
\end{claim}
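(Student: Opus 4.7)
The plan is to unpack the definitions at a leaf, observe that a height-$1$ subtree is exactly a first-price auction for a single item with residual valuations, and then check that no player has a profitable deviation from the prescribed bid profile.

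First I would fix a state $S$ of items already allocated such that only one item $j$ remains. For unit-demand buyers with the residual valuations $\vp_{i,j|S}$, the minimum Walrasian equilibrium $W_{I-S}$ on the singleton $\{j\}$ is immediate: the winner $w^{*}$ must lie in $\arg\max_i \vp_{i,j|S}$, and minimality forces the price to equal the second-highest residual value $p^{I-S}_j = \vp_{s^{*},j|S}$, where $s^{*}\in \arg\max_{i\neq w^{*}} \vp_{i,j|S}$. The support order is trivial since there is only one item, and $s^{*}$ supports $j$ at equality. From this I read off the utilities: $u_{w^{*}|S} = \vp_{w^{*},j|S} - \vp_{s^{*},j|S}\ge 0$, and $u_{i|S}=0$ for every $i\neq w^{*}$.

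Next I would write down the prescribed bids from the construction. Since no buyer has yet received ``his'' item inside $W_{I-S}$, every buyer $i$ bids $\vp_{i,j|S} - u_{i|S}$. This makes $w^{*}$ bid $\vp_{w^{*},j|S} - u_{w^{*}|S} = \vp_{s^{*},j|S}$, makes $s^{*}$ bid $\vp_{s^{*},j|S}$ as well, and makes every other buyer $i$ bid $\vp_{i,j|S} \le \vp_{s^{*},j|S}$. Ties are broken in favour of $w^{*}$ by the tie-breaking rule, so $w^{*}$ wins at price $\vp_{s^{*},j|S}$, matching the Walrasian outcome.

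Finally I would verify the equilibrium condition by a short case analysis. Buyer $w^{*}$ cannot gain by overbidding (he already wins and a higher bid in a first price auction only wastes money) nor by underbidding (any bid strictly below $\vp_{s^{*},j|S}$ loses the item to $s^{*}$ and yields $0$, which is no better than the current non-negative utility $u_{w^{*}|S}$). Buyer $s^{*}$ can only change the outcome by bidding strictly above $\vp_{s^{*},j|S}$, which would win the item at a price exceeding his own value and give strictly negative utility. Any other buyer $i$ would have to bid strictly above $\vp_{s^{*},j|S}\ge \vp_{i,j|S}$ to win, again yielding negative utility. Hence no unilateral deviation is profitable.

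I do not expect a real obstacle here: the main point is just to translate the recursive definition at $\ell=1$ into the familiar second-price-like outcome of a single first-price auction, and the verification is essentially the standard argument. The one place to be careful is the identity $u_{w^{*}|S}=\vp_{w^{*},j|S}-\vp_{s^{*},j|S}$, which is what makes the winner's prescribed bid coincide with the supporter's bid so that tie-breaking, rather than a bidding war, resolves the auction.
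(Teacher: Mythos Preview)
Your proposal is correct and follows essentially the same approach as the paper: identify the minimum Walrasian outcome on a singleton as ``highest-value buyer wins at the second-highest residual value,'' read off the prescribed bids, and verify no profitable deviation in a single first-price auction. Your case analysis is in fact more thorough than the paper's terse version, which only explicitly checks that a non-winner overbidding yields negative utility.
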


\begin{proof}
In a leaf node $r$ where an item $j$ is sold.
The minimal Walrasian equilibrium is the
second highest valuation $v_{i,j|S}$ where $S=I- \{j\}$. Since there
is no continuation, each buyer $i$, except for the highest valuation
buyer, bids $v_{i,j|S}$. Let $i_r$ be the highest valuation buyer
and he bids $b_{i_r}=\max_{i\neq i_r}\{v_{i,j|S}\}$.

This is an equilibrium, since any buyer $i\neq i_r$ bidding higher
than $b_{i_r}$ would result in negative utility. Since this is a
leaf, a subgame perfect equilibrium coincide with equilibrium.
\end{proof}

In the inductive step, some subset of items $S$ has been sold. the current item on sale is $f$. We show two lemmas

\begin{lemma}\label{no-underbid}
Assume that Claim~\ref{claim:SGP:ind} holds for any subtree of
height at most $\ell-1$. Then in any node of height $\ell$
underbidding is weakly dominated.
\end{lemma}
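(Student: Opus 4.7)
The plan is to fix a node $G$ of height $\ell$ with items $S$ already sold and $f$ currently on sale. Let $W_{I-S}$ denote the minimum Walrasian equilibrium computed there, $w$ the designated winner of $f$ at Walrasian price $p = p^{I-S}_f$, and $s$ the supporter of $f$. Under the prescribed strategies of the other buyers, I will show that for $w$, any bid $b < p$ yields (weakly) less utility than bidding $p$, thereby establishing weak dominance of underbidding.

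First I would pin down $w$'s on-equilibrium utility from $G$. Each non-winner $i$ bids $v_{i,f|S} - u_{i|S}$: the supporter $s$ bids exactly $p$ by the support identity $v_{s,f|S} - u_{s|S} = p$, while every other non-winner strictly prefers his $W_{I-S}$-item to $f$ and hence bids strictly below $p$. Thus if $w$ bids $p$ he wins $f$, earning marginal utility $v_{w,f|S} - p = u_{w|S}$. By the inductive hypothesis applied to the height-$(\ell-1)$ subgame at $S \cup \{f\}$ with $w$ as the winner of $f$, the continuation implements the SPE dictated by $W_{I-S-\{f\}}$ under residual valuations, and the envy-free inequalities of $W_{I-S}$ together with the unit-demand structure force $w$'s residual continuation utility to be zero. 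Hence $w$'s total on-equilibrium utility is exactly $u_{w|S}$.

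Next I analyze the deviation $b < p$. By the bids described above the supporter $s$ wins $f$ at price $p$, and play proceeds to the subgame at $S' = S \cup \{f\}$ with $s$ as winner of $f$; by the inductive hypothesis this subgame yields $w$ the utility $u_{w|S'}$ of the minimum Walrasian equilibrium $W_{I-S-\{f\}}$ on residual valuations. The heart of the proof is the bound $u_{w|S'} \leq u_{w|S}$. I would establish it by exploiting the supporter property: reassigning $f$ to $s$ at price $p$ is welfare-neutral, and the residual valuation $\max(0, v_{s,j} - v_{s,f})$ exactly compensates for $s$'s change of allocation, so the old prices $(p^{I-S}_j)_{j \neq f}$ together with a reallocation of $s$'s original item via an alternating indifference path in $W_{I-S}$ form a Walrasian equilibrium on the reduced market. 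Combined with the envy-free bound $v_{w,j} - p^{I-S}_j \leq u_{w|S}$ for every $j$, this caps $w$'s achievable utility on any such equilibrium at $u_{w|S}$, yielding $u_{w|S'} \leq u_{w|S}$. Hence bidding $b$ is weakly dominated by bidding $p$.

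The main obstacle is the bound $u_{w|S'} \leq u_{w|S}$. A priori the reduced-market minimum Walrasian prices could drop below $p^{I-S}_j$ on items $w$ covets, since removing $f$ and partially withdrawing $s$ have competing effects on prices, and this could let $w$ exceed the envy-free bound. Controlling this requires a careful alternating-path argument leveraging the support-order assumption --- each supporter's own item is allocated strictly later than the item it supports --- so that the displacement of $s$ propagates through the indifference graph without undercutting prices on items that $w$ could profitably acquire.
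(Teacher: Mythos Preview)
Your outline is sound up to the pivotal inequality $u_{w|S'}\le u_{w|S}$, but the mechanism you propose for it does not close. Exhibiting a Walrasian equilibrium on the reduced market $I\setminus S\setminus\{f\}$ at the \emph{old} prices $(p^{I-S}_j)_{j\neq f}$ only shows that the reduced-market \emph{minimum} prices lie coordinatewise \emph{below} the old ones; it gives no lower bound on them. So nothing you have written prevents some item $j$ from becoming strictly cheaper in the new minimum, in which case $v_{w,j|S}-p^{I-S'}_j$ may exceed $u_{w|S}$ and your envy-free inequality, which is anchored at the old prices, no longer applies. The alternating path can absorb $s$'s displacement at the old prices --- that is the easy direction --- but it does not by itself rule out a further price drop in the minimum equilibrium on the reduced market.

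The paper establishes the inequality by a counting contradiction (Lemma~\ref{lem:f_prefer}, then invoked for Lemma~\ref{no-underbid}). If $w$ strictly gains after $f$ goes to some taker, then $w$ buys an item whose new price fell by some $\delta>0$; take a maximal set $D$ of $k$ items whose new price dropped by at least $\delta$. Using envy-freeness at the old prices, every buyer in
\[
B(D)\;=\;\bigl(\{w\}\cup\{\win(x):x\in D\}\cup\{\supp(x):x\in D\}\bigr)\setminus\{\taker\}
\]
must, in the new equilibrium, buy an item whose price dropped by at least $\delta$, hence an item of $D$. The support-order assumption is used precisely here, to force $|B(D)|\ge k+1$: the last-sold item of $D$ contributes a supporter outside $\{\win(x):x\in D\}$, and $w$ (who won $f$, which precedes all of $D$ in the order and supports nothing later) is distinct from all of these. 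That places $k+1$ buyers on $k$ items --- the contradiction. This pigeonhole, rather than a price-preserving construction, is the missing idea in your sketch.
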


Lemma \ref{no-underbid} follows from the following lemma

\begin{lemma}\label{lem:f_prefer}
  Buyer $\win(f)$ weakly prefers getting $f$ for $p_f$ to any minimal Walrasian equilibrium generated after $f$ is given to any other buyer.
\end{lemma}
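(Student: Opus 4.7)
The plan is to reduce the lemma to a welfare inequality on $M_1$ via the VCG characterization of the minimum Walrasian for unit-demand buyers, then establish that welfare inequality using the support-order structure.

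First, for unit-demand buyers the minimum Walrasian prices coincide with the VCG prices, so each buyer's utility equals his marginal contribution to the maximum social welfare. Let $W(\cdot)$ denote the maximum social welfare under the appropriate residual valuations and let $M_2$ denote the market on $I-S-\{f\}$ after $b$ takes $f$. Then
\[
u_w = W(M_1) - W(M_1 \setminus \{w\}), \qquad u_w^{M_2} = W(M_2) - W(M_2 \setminus \{w\}).
\]
A direct calculation using $\vp_{b,j|S\cup\{f\}} = \max(0, v_{b,j|S} - v_{b,f|S})$ gives the identity $W(M_2) + v_{b,f|S} = W(M_1 \mid \mu(f){=}b)$, the maximum welfare on $M_1$ constrained so that $b$ receives $f$, and analogously for the $M_2 \setminus \{w\}$ market. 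Substituting, the lemma reduces to the welfare-loss inequality
\[
W(M_1) - W\bigl(M_1 \mid \mu(f){=}b\bigr) \;\geq\; W\bigl(M_1\setminus\{w\}\bigr) - W\bigl(M_1\setminus\{w\} \mid \mu(f){=}b\bigr),
\]
i.e., forcing $b$ to take $f$ costs at least as much welfare when $w$ is in the market as when $w$ is absent.

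To prove this I would exploit that $f$ is the first item in the support order of $M_1$. Let $s$ be the supporter of $f$: by definition $s$ has no item with $v_{s,f|S} = p_f$, or $s$ is matched to an item $k$ later in $\pi$ with $v_{s,k|S} - p_k = v_{s,f|S} - p_f = u_s$. Iterating this indifference along the support chain shows that the optimal matching for $M_1 \setminus \{w\}$ is obtained from $\mu_1$ by a sequence of indifferent swaps (each supporter moves onto the item released upstream), with the net welfare change telescoping exactly to $-u_w$. I would then track the alternating path in the symmetric difference $\mu_1 \triangle \mu^c$ (with $\mu^c$ the constrained opt satisfying $\mu^c(f){=}b$) that contains both $(w,f)$ and $(b,f)$, and compare it to the analogous alternating path in $\mu_{1\setminus\{w\}} \triangle \mu^c_{1\setminus\{w\}}$; the support-chain indifferences together with optimality of $\mu_1$ should yield that the $w$-present welfare loss dominates the $w$-absent one by a non-negative quantity, essentially because the former path must additionally displace $w$ from $f$ before reshuffling along the chain.

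The main obstacle is this augmenting-path comparison. The proof relies crucially on the support-order property that $f$ is first in $\pi$: small examples where one instead picks an arbitrary item to play the role of $f$ show that $u_w^{M_2}$ can strictly exceed $u_w$ (because removing $f$ can relax competition so much that $w$'s residual options become strictly better than his equilibrium option), so the argument must genuinely exploit that $f$'s supporter receives a later item or no item, and must also account carefully for how $b$'s residual valuation changes propagate through the chain.
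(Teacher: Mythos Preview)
Your approach is genuinely different from the paper's and, as stated, contains a real gap. The identity $W(M_2)+v_{b,f|S}=W(M_1\mid\mu(f){=}b)$ is false in general: in the constrained unit-demand problem $M_1\mid\mu(f){=}b$, buyer $b$ receives \emph{only} $f$, whereas in $M_2$ buyer $b$ retains positive residual values $\max(0,v_{b,j|S}-v_{b,f|S})$ and may optimally be matched to some other item $g$. In that case $W(M_2)+v_{b,f|S}$ counts $\max(v_{b,f|S},v_{b,g|S})$ for $b$, which can strictly exceed the $v_{b,f|S}$ that $b$ contributes under the constraint. A two-buyer, two-item instance ($v_{w,f}=10$, $v_{w,g}=0$, $v_{b,f}=1$, $v_{b,g}=5$) already gives $W(M_2)+v_{b,f|S}=5$ versus $W(M_1\mid\mu(f){=}b)=1$. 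So your reduction to the displayed welfare-loss inequality does not go through as written; you would at minimum need to argue separately that in the optimal $M_2$ (and in the optimal $M_2\setminus\{w\}$) buyer $b$ is unmatched, or else reformulate the target inequality. Beyond that, the augmenting-path comparison you flag as ``the main obstacle'' is indeed where all the content lies, and what you have is a plan rather than an argument.

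The paper proves the lemma by a direct price-monotonicity argument that does not pass through VCG or welfare at all. It assumes for contradiction that $\win(f)$ strictly prefers the new equilibrium $(\nwin,\np)$, which forces $\win(f)$ to buy some item whose price dropped by $\delta>0$. Letting $D$ be a maximal set of items whose $\np$-price is at least $\delta$ below the $\pp$-price, the paper exhibits a set $B(D)=\bigl(\{\win(f)\}\cup\{\supp(x):x\in D\}\cup\{\win(x):x\in D\}\bigr)\setminus\{\taker\}$ of buyers, each of whom must (by the Walrasian best-response condition) also buy an item whose price dropped by $\delta$ in $\np$. The support-order hypothesis is used precisely here, in the counting: it guarantees that $\win(f)$ is neither a winner nor a supporter of any item in $D$, and that the last-sold item of $D$ has a supporter outside $\{\win(x):x\in D\}$, forcing $|B(D)|\ge |D|+1$. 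This pigeonhole contradiction ($|D|+1$ buyers each matched in $\nwin$ to an item whose price dropped by $\delta$, but only $|D|$ such items) finishes the proof. Compared with your plan, this argument is shorter, avoids the VCG detour, and isolates exactly where ``$f$ is first in the support order'' is consumed.
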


If there are no ties, then this is a strong preference.

\begin{proof}
Denote by $\taker$ the buyer who received the item, and let the new Walrasian equilibrium be denoted as $\nwin$ and $\np$.

If buyer $\win(f)$ does not buy any item at $\nwin$, or buys an
item $g_1$ such that $\np_{g_1} \ge p_{g_1}$
then $\win(f)$ does not strictly prefer the new equilibrium. Therefore, we only
need to consider the case that for some item $g_1$ we have
$\nwin(g_1) = \win(f)$ and $\np(g_1) = p_(g_1) -\delta$ for some
$\delta > 0$.

To reach a contradiction, we assume that there is a maximum size set
$D$ of $k\geq 1$ items, such that for each $j \in D$ we have $\np_j
\le p_j - \delta$, and
  show that there must be a set of buyers $B(D)$ with the following properties:

\begin{enumerate}
\item Each buyer $i \in B(D)$ bought an item in $\nwin$ whose price dropped by at least $\delta$. Formally, letting $\nwin(j) = i$ we have $\np_{j} \le p_{j} -
\delta$.
\item The set is large: $|B(D)| \ge k+1$
\end{enumerate}

The existence of this set implies a contradiction. If $k+1$ buyers
bought items cheaper by $\delta$, how can only $k$ items are cheaper
by $\delta$?

To show that $B(D)$ exists, let
\[B(D) = \left(\{\win(f)\} \cup \{\supp(x) : x \in D\} \cup \{\win(x) : x \in D\}\right) \setminus \{\taker\}\]

We show that the first condition holds. Note that $\win(f)$ bought
an item whose price dropped $\delta$ by our assumption. Consider an
item $j \in D$ and a buyer $i$ with $\win(j) = i$ or $\supp(j) =
i$. If buyer $i$ did not get an item in $\nwin$,   then we have a
contradiction, since $v_{i,j|S} \ge \p_j \ge \np_j + \delta$. This
implies that each buyer in $\{\supp(x) : x \in D\} \cup \{\win(x) :
x \in D\}$ received an item which maximizes their utility given the
price vector $\np$.  Letting $\nwin(j') = i$, it must be that
\[v_{i,j'|S} - \np_{j'} \ge v_{i,j|S} - \np_{j}\]
but we also know that
\[v_{i,j|S} - \p_{j} \ge v_{i,j'|S} - \p_{j'}\;\;\;\;\mbox{and}\;\;\;\;\np_{j} \le \p_j - \delta,\]
which implies that $\np_{j'} \le \p_{j'} - \delta$.

We now show the second condition, that $|B(D)| \ge k+1$.
Since we are selling items using a support order, we have that
  \[|\{\supp(x) : x \in D\} \cup \{\winp(x) : x \in D\}| \ge k+1,\]
since the last item that is sold in $D$ has the supporter not
winning any item in $D$.

In addition, $\win(f) \not \in \{\win(x) : x \in D\}$ since all
the items in $D$ had strictly positive price in $\p$, and a buyer
cannot buy two items with positive price in an equilibrium. Finally,
$\win(f) \not \in \{\supp(x) : x \in D\}$ since the items are sold
in support order.

Together this implies that $|\{\winp(f)\} \cup \{\supp(x) : x \in
D\} \cup \{\winp(x) : x \in D\}| \ge k+2$, and hence $|B(D)| \ge
k+1$.
\end{proof}

\begin{proof}{{\bf of Lemma \ref{no-underbid}}}
Consider a node $r$ of height $\ell$ where an item $f$ is sold, and
previously sold items are allocated using $S$. Assume that from the
$r$ the on equilibrium path is $(\winp,\pp)$.

If any player except $\winp(f)$ underbids, the payment and
allocation do not change, and the claim follows from the inductive
hypothesis.

If $\winp(f)$ underbids, he no longer gets the item and the buyer
$\supp(f)$ wins the item. The strategy tree continue to node $r'$, in which $\supp(f)$
wins the item. According to Lemma \ref{lem:f_prefer}, this is not desirable for $\win(f)$.
\end{proof}

\begin{lemma}
Assume that Claim~\ref{claim:SGP:ind} holds for any subtree of
height at most $\ell-1$. Then in any node of height $\ell$
overbidding is weakly dominated.
\end{lemma}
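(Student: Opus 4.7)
The plan is to mirror the structure used for Lemma \ref{no-underbid}: prove an analog of Lemma \ref{lem:f_prefer} from the would-be overbidder's point of view, and then conclude that bidding strictly above $p_f$ cannot be profitable. Concretely, I would first establish the following ``overbidder'' claim: for any buyer $i \ne \winp(f)$, if $i$ acquires $f$ at price $p_f$ and the remaining subtree is played according to the inductively-assumed SPE, which implements the new minimal Walrasian equilibrium $(\nwin,\np)$ on $I - S - \{f\}$ under the residual valuations that account for $i$ now holding $f$, then $i$'s total utility is at most $u_{i|S}$. Given this claim, an overbid of $p_f + \epsilon$ for $\epsilon > 0$ reduces the utility by an additional $\epsilon$, so overbidding is weakly (indeed strictly) dominated by the on-equilibrium bid.

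To prove the overbidder claim I would split into cases based on what $i$ receives in $\nwin$. If $i$ is allocated no item, or is allocated some $h$ with $v_{i,h|S} \le v_{i,f|S}$, then unit demand implies $i$ gains no extra residual value from $\nwin$, so its total utility is at most $v_{i,f|S} - p_f \le u_{i|S}$, where the last inequality is the Walrasian preference $u_{i|S} \ge v_{i,f|S} - p_f$. Otherwise $i$ is allocated some $h$ with $v_{i,h|S} > v_{i,f|S}$ in $\nwin$, and its total utility is $v_{i,h|S} - p_f - \np_h$. For this to exceed $u_{i|S}$, combining with the Walrasian inequality $u_{i|S} \ge v_{i,h|S} - p_h$ forces $\np_h < p_h - p_f$, so the price of $h$ must have strictly dropped in the new equilibrium.

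From here I would run the cascading contradiction argument of Lemma \ref{lem:f_prefer}. Fix $\delta > 0$ with $\np_h \le p_h - \delta$, let $D$ be the maximal set of items whose prices dropped by at least $\delta$ in $\nwin$ (so $h \in D$ and $k = |D| \ge 1$), and set
\[
B(D) \;=\; \{i\} \cup \{\supp(x) : x \in D\} \cup \{\winp(x) : x \in D\}.
\]
The preference-chaining argument of Lemma \ref{lem:f_prefer} shows that every member of $\{\supp(x) : x \in D\} \cup \{\winp(x) : x \in D\}$ must be allocated some item of $D$ in $\nwin$, and by our assumption $i$ itself is allocated $h \in D$. Hence each buyer in $B(D)$ holds a distinct item of $D$, giving $|B(D)| \le k$. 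On the other hand, the support-order property forces $|\{\supp(x) : x \in D\} \cup \{\winp(x) : x \in D\}| \ge k+1$, since the supporter of the last-sold item of $D$ is allocated outside $D$ or not at all. Therefore $|B(D)| \ge k+1$, a contradiction.

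The main obstacle, and the one place the proof diverges from Lemma \ref{lem:f_prefer}, is that the ``extra'' special buyer is now the overbidder $i$ itself, who may already belong to $\{\supp(x) : x \in D\} \cup \{\winp(x) : x \in D\}$, whereas Lemma \ref{lem:f_prefer} could exclude $\winp(f)$ from both of those sets using the fact that $\winp(f)$ already owned the positive-priced item $f$. This causes no real issue: the inequality $|\{\supp(x) : x \in D\} \cup \{\winp(x) : x \in D\}| \ge k+1$ alone already contradicts $|B(D)| \le k$, regardless of whether $i$ is already counted. Similarly, no $\setminus \{\taker\}$ adjustment is needed because here the taker is $i$, and in the nontrivial case $i$ has been verified to hold an item of $D$ in $\nwin$.
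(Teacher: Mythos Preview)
Your proposal is correct and follows essentially the same approach as the paper: both arguments reduce to the case where the overbidder $i$ is allocated an item in the new minimal Walrasian equilibrium whose price has dropped, define the same set $B(D)=\{i\}\cup\{\supp(x):x\in D\}\cup\{\winp(x):x\in D\}$, and derive the same counting contradiction via the support-order inequality. Your write-up is in fact somewhat more careful than the paper's --- you spell out the unit-demand utility computation in the two sub-cases $v_{i,h|S}\le v_{i,f|S}$ versus $v_{i,h|S}>v_{i,f|S}$, and you explicitly address why the absence of a $\setminus\{\taker\}$ term and the possibility that $i$ already lies in the winner/supporter sets do not spoil the count, points the paper leaves implicit.
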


\begin{proof}
Consider a node $r$ of height $\ell$ where an item $f$ is sold, and
previously sold items are allocated using $S$. Assume that from the
$r$ the on equilibrium path is $(\winp,\pp)$.

If buyer $\winp(f)$ overbids, the allocation does not change and the
price goes up, so it is strictly dominated. A buyer $i \neq
\winp(f)$ overbids and does not get the item nothing changes (since
it is a first price auction we do not need to worry about payments).

The case we need to consider is that buyer $i$ overbids $\winp(f)$
and wins the item.
The strategy tree continue to node $r'$ with a new Walrasian
equilibrium define by $\winn$ and $\pn$. Let $Sn$ be the allocation
$S$ with the addition that $\supp(f)$ is allocated $f$.


If buyer $i$ does not get an item in the new equilibrium, then the
deviation was not profitable, since by the inductive hypothesis
(Claim~\ref{claim:SGP:ind}) following deviations after $r'$ are not
profitable.
Let item $j$ be such that $\winn(j) = i$. If $\pn_j \ge \pp_j$, then
the deviation was not profitable since $\winp$ is an equilibrium (so
either $v_{i,j|S} \le \pp_j$ or buyer $i$ received an item he
prefers under $\pp$). So it must be that $\pn_j = \pp_j - \delta$
for some $\delta > 0$.

Again we let $D$ be a maximum size set such that for each $j \in D$
we have $\pn_j \le \pp_j - \delta$, and denote $|D| = k$. We show
that there must be a set of buyers $B(D)$ with the following
properties:
\begin{enumerate}
\item Each buyer $i' \in B(D)$ bought an item in $\winn$ whose price dropped by at least $\delta$. Formally, letting $\winn(j) = i'$ we have $\pn_{j} \le \pp_{j} -
\delta$.
\item The set is large: $|B(D)| \ge k+1$
\end{enumerate}
We define
  \[B(D) = \{i\} \cup \{\supp(x) : x \in D\} \cup \{\winp(x) : x \in D\}\]
Since we sell items in a support order, we have that $|B(D)|$ is at
least $k+1$, establishing the second condition. An in
Lemma~\ref{no-underbid}, each buyer in $B(D)$ has to win an item
whose price dropped.
\end{proof}

Ruling out underbidding and overbidding finishes the proof of Theorem \ref{thm:equilibrium}.

\subsection{Optimistic Conservative and  Uniqueness}

We would like to show that our strategy {\tt Unit-Wlrs-Eq} is the
unique pure optimistic conservative bidding strategy, given our game
tree. More precisely, we fix the tree order of selling the items and
the tie-breaking rules in each node. Given this, the only pure
sub-game perfect optimistic conservative bidding strategy is {\tt
Unit-Wlrs-Eq}.
We establish the following main result.

\begin{theorem}\label{uniq-unit}
The equilibrium {\tt Unit-Wlrs-Eq} is the unique pure optimistic
conservative sub-game perfect, up to degeneracies.
\end{theorem}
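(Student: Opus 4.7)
The plan is to prove uniqueness by induction on $\ell$, the height of the subtree from the leaves, equivalently the number of items remaining to be sold from the current node. The inductive hypothesis asserts that on every subtree of height at most $\ell-1$, the only pure optimistic conservative SPE coincides with {\tt Unit-Wlrs-Eq}, up to Walrasian degeneracies. The base case $\ell=1$ is immediate: when a single item $f$ remains, optimistic conservative bidding reduces to ordinary conservative bidding, and the standard first-price argument (exactly as in the base case of Theorem~\ref{uniq-additive}) forces $f$ to be sold to $\argmax_i v_{i,f|S}$ at the second-highest marginal valuation, which is the minimum Walrasian equilibrium on the singleton market.

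For the inductive step, fix a node of height $\ell$ with allocated set $S$ and item $f$ on sale; let $\winp,\pp$ denote the minimum Walrasian equilibrium on $I\setminus S$ with its support order. I will first invoke the inductive hypothesis to conclude that, whoever wins $f$ in a candidate optimistic conservative SPE, the continuation along the equilibrium path is the minimum Walrasian equilibrium on the residual market, which pins down the equilibrium continuation utility $u_{i|S}$ for every buyer in terms of a Walrasian utility. Substituting into the optimistic conservative cap $\bar b_i = v_{i,f|S}-u_{i|S}$, the Walrasian preference inequalities together with the supporter's defining indifference immediately yield $\bar b_{\winp(f)} = \bar b_{\supp(f)} = \pp(f)$ and $\bar b_i \le \pp(f)$ for every other buyer. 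Hence no bid can exceed $\pp(f)$; and if the price were strictly smaller, both $\winp(f)$ and $\supp(f)$ would have a strict $\epsilon$-profitable deviation to snatch $f$, forcing the equilibrium price to equal $\pp(f)$. The winner must then be a buyer whose cap attains $\pp(f)$, and the fixed tie-breaking rule selects $\winp(f)$---exactly matching {\tt Unit-Wlrs-Eq}, with any residual ambiguity being the Walrasian indifference set that the statement explicitly absorbs into ``up to degeneracies.''

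The main obstacle is breaking the circular dependence between the bid caps $\bar b_i$ and the utilities $u_{i|S}$, which are themselves defined by the candidate equilibrium. One has to combine the inductive hypothesis with Lemma~\ref{lem:f_prefer} (and an analogous bound ruling out overbidding alternatives) to argue that no matter which buyer is conjectured to win $f$, the induced continuation still forces the three bid-cap inequalities above. The support-order selling is essential here: it guarantees that $\supp(f)$ has not yet been served and so can be used to enforce the price floor from below, while $\winp(f)$ has not yet been allocated any earlier item, so his optimistic cap is directly computable from his Walrasian utility. The degenerate case in which several buyers tie on valuations (or share the supporter role) has to be accounted for explicitly; this is precisely the ambiguity that the qualifier ``up to degeneracies'' is designed to handle.
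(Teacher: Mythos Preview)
Your proposal is essentially correct and follows the same inductive skeleton as the paper: induction on the height of the subtree, the single-item base case reducing to ordinary conservative first-price, and the inductive step driven by Lemma~\ref{lem:f_prefer} together with the supporter's indifference and the support order.

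The one genuine difference is how you organize the inductive step. The paper does a direct case split on the identity of the winner and the price: if $\winp(f)$ is not the winner, Lemma~\ref{lem:f_prefer} (plus the inductive hypothesis on the continuation) gives $\winp(f)$ a profitable deviation to bid $v_{\supp(f),f|S}=\pp_f$; if $\winp(f)$ wins but at a price $p<\pp_f$, the supporter $\supp(f)$ profitably deviates to $(p+\pp_f)/2$. You instead try to compute all bid caps $\bar b_i = v_{i,f|S}-u_{i|S}$ at once and read off the outcome. The difficulty, which you correctly flag, is that $u_{i|S}$ is the utility along the \emph{candidate} equilibrium path, not the Walrasian utility, so the clean identities $\bar b_{\winp(f)}=\bar b_{\supp(f)}=\pp_f$ and $\bar b_i\le \pp_f$ are not available until you already know who wins $f$ and at what price. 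In particular, the assertion $\bar b_{\winp(f)} = \pp_f$ is not correct as an intermediate step: if $\winp(f)$ is the winner at price $q$ in the candidate SPE, his cap is $q$; if he is not the winner, Lemma~\ref{lem:f_prefer} only yields $\bar b_{\winp(f)}\ge \pp_f$. Once you unwind the circularity, your argument collapses back into the paper's two-case analysis. So the bid-cap framing buys a slightly more uniform statement at the end but forces you through exactly the same case distinctions to justify it; the paper's route is a touch more economical.

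One small point in your favor: your cap-based viewpoint makes the omitted ``price too high'' case ($\winp(f)$ wins at $q>\pp_f$) transparent, since every other buyer's cap is at most $\pp_f$ and $\winp(f)$ would then strictly prefer to lower his bid. The paper's write-up does not spell this out explicitly.
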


In this extended abstract we assume that valuations are generic, so there are no degeneracies.
Handling ties in the valuations requires a bit more care, but doesn't change the proof fundamentally.

\begin{proof}
The proof is based on induction from the leaves of the tree to the
root.

\noindent{\bf Base case}
The base case is the leaves of the tree. In each leaf $(S;f)$, we
have a set of items $S$ which was already allocated, and the last
item $f$ is being on sale.
Recall that $v_{i,f|S}$ is the residual valuation of buyer $i$ for
item $f$ given the allocation $S$.
Therefore we have a single item first price auction with valuations
$v_{\cdot,f|S}$. In this case, optimistic conservative bidding and
conservative bidding coincide. In particular, buyer $i$ can bid
above $v_{i,f|S}$, the residual value for buyer $i$ for item $f$
given the allocation of the items in $S$. The minimal revenue
Walrasian equilibrium has the highest valuation buyer winning the
item at the second highest valuation price.

Let $w=\winn(f)=\arg \max_i v_{i,f|S} $ be the buyer that maximizes
the residual valuation. According to the tie breaking rules, ties
are always broken in favor of $w$. Let $s=\supp(f)$ be the player
which maximizes $v_{i,f|S}$ where $i \neq \winn(f)$, i.e.,
$s=\arg\max_{i:i\neq w} v_{i,f|S}$. That is, $s$ has the second
highest valuation for the item, given the allocation $S$. Note that
since only $w$ can bid above $v_{s,f|S}$, and since ties are broken
in favor of $w$, no player would ever bid strictly above
$v_{s,f|S}$.

Now if the price is strictly below $v_{s,f|S}$, at most one of $s$
and $w$ would win the item, and the other would deviate and take it.
Therefore the price is $v_{s,f|S}$.

\noindent{\bf Inductive step}
For the inductive step, an item $f$ is being sold, after a set $S$
was already sold. The inductive hypothesis is that any extension of
the allocation $S$ by allocating the item $f$ to one of the buyers,
the only optimistic conservative sub-game perfect equilibrium
is the minimal Walrasian equilibrium.
Let $(\winn, \pn)$ be a minimal Walrasian equilibrium with respect
to $v_{\cdot,\cdot |S}$.
Suppose now that $\winn(f)$ is not allocated item $f$ and item $f$
is allocated to a buyer $i$. In this case, due to the inductive
hypothesis, buyer $\winn(f)$ utility is determined by a minimal
Walrasian equilibrium, in which $S$ is extended to $S'$ by
allocating $f$ to $i$.

Lemma \ref{lem:f_prefer} shows
that this is not an equilibrium, since $\winn(f)$ would be better
off bidding $v_{\supp(f),f|S}$ and winning $f$ for sure.

Suppose that $\winn(f)$ is allocated the item $f$, but for some
price $p < \pn_f$. Due to the inductive hypothesis, we know that
after $f$ is allocated to $\winn(f)$, the rest of the minimal
Walrasian equilibrium plays out. The definition of supporting buyer
gives that $u_{\supp(f)} = v_{\supp(f),f|S} - \pn_f$, and therefore
buyer $\supp(f)$ is better off bidding $(p + \pn_f)/2$.
%
Therefore, player $\winn(f)$ is allocated the item $f$ for the price $\pn_f$, and the theorem holds.
\end{proof}

\bibliography{firstprice}
\bibliographystyle{plain}
\appendix

\section{Uniqueness of the Optimistic Conservative Equilibrium of a Single Item First Price Auction}\label{app:first}

In this section we prove that the unique conservative optimistic equilibrium of a first price auction
of a single item is for the winner to buy it and pay the second price, even when
considering mixed equilibria.

Let  $f$ be an item sold in a first price auction with valuations
$v_{\cdot,f}$. In this case, optimistic conservative bidding and
conservative bidding coincide. In particular, buyer $i$ can not bid
above $v_{i,f}$. The minimal revenue
Walrasian equilibrium has the highest valuation buyer winning the
item at the second highest valuation price.

Consider some mixed equilibrium for this auction, where $F_i$ is the
cumulative function of buyer $i$. Conservative bidding means that
the bid of buyer $i$ is at most $v_{i,f}$, i.e., $F_{i}(v_{i,f})
= 1$. Let $w=\winn(f)=\arg \max_i v_{i,f} $ be the buyer that
maximizes the residual valuation. According to the tie breaking
rules, ties are always broken in favor of $w=\winn(f)$. Let
$s=\supp(f)$ be the player which maximizes $v_{i,f}$ where $i \neq
\winn(f)$, i.e., $s=\arg\max_{i:i\neq w} v_{i,f}$. That is,
$s=\supp(f)$ has the second highest valuation for the item, given
the allocation $S$. Note that since only $w=\winn(f)$ can bid above
$v_{s,f}$, and since ties are broken in favor of $w=\winn(f)$, no
player would ever bid strictly above $v_{s,f}$. I.e., for any
buyer $i$ we have $F_i(v_{s,f})=1$.

Let $b_i=\inf \{b:F_i(b)>0\}$. If $b_w < b_s$ then we claim that
buyer $w$ is not doing best response. Specifically, we can modify
the bid distribution of buyer $w$ to shift the distribution mass
from $[b_w,b_s]$, which is strictly positive, to $v_{s,f}$. Since
we have non-generic valuations, this will strictly increase the
utility. Therefore $b_w\geq b_s$.

We now consider whether there are mass points at $b_w$ and $b_s$.
Let $h_i=\sup_b \{b: F_i(b)<1\}$.
 Let us enumerate the four cases:
\begin{itemize}
\item Case $F_w(b_w)=0,F_s(b_s)=0$:
In this case $w$ is not best responding. Note that there is an
$\epsilon_0>0$, such that for any $\epsilon<\epsilon_0$ there is a
bid $b^\epsilon_w>b_w$ which has $F_w(b^\epsilon_w)\geq \epsilon$
and $F_s(b^\epsilon_w)<1$. This implies that the utility of $w$ is
at most $\epsilon$ for arbitrary small $\epsilon$ and therefore it
is $0$. However (due to the non-generic valuations) by bidding
$v_{s,f}$ buyer $w$ is guarantee a positive utility. Contradiction
to the assumption that this was an equilibrium.

\item Case $F_w(b_w)>0,F_s(b_s)=0$: This case $w$ is not BR and would do
better to shift the distribution mass from $b_w$ to $v_{s,f}$.

\item Case $F_s(b_s)>0$: In this case $s$ has zero utility when it bids $b_s$,
hence it has overall zero utility. Since $s$ bids conservatively it
implies that it either bids $v_{s,f}$ or losses with any other
bid. This implies that $h_s=\sup_b \{b: F_s(b)<1\}$ has $h_s\leq
b_w$. Therefore we need that $b_w=v_{s,f}$ otherwise buyer $s$ can
bid $(b_w+v_{s,f})/2$ and secure a positive utility. Since $w$
bids at most $v_{s,f}$ we have established that $w$ always bids
$v_{s,f}$.
\end{itemize}

\section{The Effect of the Selling Order} \label{app:order}
Part of the contribution of this work is to raise the algorithmic question of choosing the selling order. In this appendix, we present an example where the selling order leads to a factor of $m$ in the revenue, of the worse equilibrium.
We use the same market to show an example in which
the worse revenue of a bad order is far from the Walrasian revenue.

\begin{theorem}\label{thm-order-matters}
There is a market with $m$ items and $m+1$ unit demand buyers and maximal social welfare of $m + o(1)$ where
for some ordering of items there exists an SPE whose revenue is $1$, and
for a different ordering of items every SPE has
revenue at least $m$.
\end{theorem}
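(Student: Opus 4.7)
My plan is to take a market with $m$ items $1,\ldots,m$ and $m+1$ unit-demand buyers: a ``nested'' buyer $D$ with $v_{D,1}=1$ and $v_{D,i}=1-\epsilon$ for $i\ge 2$ (for $\epsilon>0$ small, say $\epsilon = 1/m^3$), together with ``singleton'' buyers $B_1,\ldots,B_m$ where each $B_i$ values item $i$ at $1$ and every other item at $0$. The welfare-maximizing allocation assigns item $1$ to $D$ and item $i$ to $B_i$ for $i\ge 2$, giving social welfare $m = m+o(1)$.

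\textbf{Bad order.} In the natural order $1,2,\ldots,m$, I would exhibit an SPE with total revenue $1$, reminiscent of the construction in Lemma \ref{lem:PoA-add}. On the equilibrium path, at item $1$ both $D$ and $B_1$ bid $1$ with the tie broken in favor of $D$, so $D$ wins at price $1$ and becomes unit-demand satiated; each subsequent item $i \ge 2$ then sells at price $0$ to $B_i$. Off the equilibrium path, should $D$ lose item $1$, I would recursively specify the same type of strategy on the residual subgame on items $\{2,\ldots,m\}$, in which each $B_i$ wins item $i$ at price $1-\epsilon$ (as forced by $D$'s snatch threat), leaving $D$ with total utility $0$. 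I would verify via Theorem \ref{folks} that no snatch is profitable: $D$'s on-path and off-path utilities both equal $0$; $B_1$'s snatch of item $1$ costs strictly more than its value; and each $B_i$ with $i\ge 2$ already receives the maximum possible utility of $1$.

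\textbf{Good order.} In the reverse order $m,m-1,\ldots,1$, I would prove that every SPE has revenue at least $m-o(1)$. The argument rests on two facts I would establish by backward induction. First, in any SPE $D$ cannot win an item $i\ge 2$: otherwise $B_i$, whose only source of utility is item $i$, would snatch by bidding slightly more than $p_i$ unless $p_i \ge 1$, but $D$'s value of $1-\epsilon$ for item $i$ makes any price $\ge 1$ strictly unprofitable, so $D$ prefers to lose. Second, $D$ therefore remains available for item $1$ when it is sold; since both $D$ and $B_1$ value item $1$ at $1$ and no future rounds remain, the SPE on this one-shot sub-auction forces the price of item $1$ to $1$, pinning $D$'s on-path utility at $0$. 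Applying Theorem \ref{folks} to $D$'s snatch of each item $i \ge 2$ then gives $0 \ge (1-\epsilon) - p_i - \delta$, i.e.\ $p_i \ge 1 - \epsilon$ as $\delta \to 0$. Summing over items, the total revenue in any SPE is at least $1 + (m-1)(1-\epsilon) = m - (m-1)\epsilon$, which is $m - o(1)$ for our choice of $\epsilon$.

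\textbf{Main obstacle.} The hard part is the good-order lower bound, especially verifying that $D$ never wins an item $i\ge 2$ in \emph{every} SPE, not just in a canonical backward-induction one, since Theorem \ref{folks} certifies many exotic equilibria with unusual off-path play. The critical leverage is the extreme lopsidedness of the $B_i$'s: since each $B_i$ has zero value for every item but one, the only way to grant $B_i$ positive on-path utility is to deliver item $i$ at a price strictly below $1$, which together with $D$'s snatch threat pins down the identity of the winner and the price of every item up to the small slack $O(m\epsilon)$.
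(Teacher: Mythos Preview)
Your construction is a \emph{star} (one buyer $D$ interested in all items, each $B_i$ interested in a single item), whereas the paper builds a \emph{chain}: buyer $b_0$ wants only item $1$, buyer $b_i$ (for $1\le i\le m-1$) wants items $i$ and $i{+}1$ at value $1+i\epsilon$, and $b_m$ wants only item $m$ at $1+m\epsilon$. Both constructions handle the bad order in the same way (one head-to-head fight on item $1$, then zero prices), and your Theorem~\ref{folks} check there is fine.

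The real divergence is in the good-order lower bound, and here your construction falls short of the statement by a hair that cannot be repaired. You correctly deduce $p_1\ge 1$ and, from $D$'s snatch threat, $p_i\ge 1-\epsilon$ for $i\ge 2$; but that is all your market supports. Indeed, selling in order $m,\ldots,1$ with $B_i$ winning item $i$ at price exactly $1-\epsilon$ (for $i\ge 2$) and $p_1=1$ satisfies every inequality of Theorem~\ref{folks}: $D$'s snatch of item $i\ge2$ yields $(1-\epsilon)-(1-\epsilon)=0$, matching $D$'s on-path utility, and no other buyer has a profitable snatch. So in your market there \emph{is} an SPE with revenue $1+(m-1)(1-\epsilon)=m-(m-1)\epsilon<m$, and you cannot conclude ``every SPE has revenue at least $m$'' as the theorem demands. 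Taking $\epsilon=0$ does not rescue this, since then the allocation argument (that $D$ loses every item $i\ge2$) relies on the strict inequality $v_{D,i}<v_{B_i,i}$.

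The paper's chain structure sidesteps this. In the reverse order, one first forces $\win(i)=b_i$ inductively (once $b_{i+1}$ takes item $i{+}1$, buyer $b_i$ is left wanting only item $i$, which he values strictly above $b_{i-1}$). Then, because buyer $b_i$ is indifferent between items $i$ and $i{+}1$ and item $i{+}1$ is sold \emph{first}, the snatch condition yields $p_i\le p_{i+1}$; combined with $p_1\ge 1$ (from $b_0$), this gives $p_i\ge 1$ for \emph{every} $i$, hence revenue $\ge m$ exactly. The chain buys you a price-monotonicity cascade that a star cannot produce, and that is precisely what closes the $o(1)$ gap.
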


And also
\begin{theorem}\label{thm-walras}
There is a market with $m$ items and $m+1$ unit demand buyers where
the minimal Walrasian revenue is $m$,
and there exists an order of items and an SPE whose revenue is $1$.
\end{theorem}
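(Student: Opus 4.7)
I would prove Theorem \ref{thm-walras} by exhibiting an explicit market and applying the SPE characterization of Theorem \ref{folks}. The market has $m$ items and $m+1$ unit demand buyers: $m$ ``specialists'' $A_1,\ldots,A_m$ with $v_{A_i,i}=1$ and $v_{A_i,j}=0$ for $j\ne i$, and one ``generalist'' $B$ with $v_{B,j}=1+\epsilon$ for every $j$ (tiny $\epsilon>0$).

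For the Walrasian revenue, I would argue that the welfare-maximizing allocation must assign $B$ some item (gaining $1+\epsilon$ rather than $1$), so, say, $B\to 1$ and each remaining item $j\ge 2$ goes to $A_j$, leaving $A_1$ unallocated. The Walrasian constraints then read $p_1\ge v_{A_1,1}=1$ (from $A_1$'s nonparticipation) and $p_j\ge p_1$ for every $j$ (from $B$'s preference, using that $v_{B,\cdot}$ is constant). These force $p_j\ge 1$ for all $j$, and $p_j=1$ is easily checked to be a Walrasian equilibrium, so the minimum Walrasian revenue is exactly $m$.

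For the SPE of revenue $1$, I would use the selling order $\pi=(1,2,\ldots,m)$ and the allocation $\win(1)=B$ at price $1$ and $\win(j)=A_j$ at price $0$ for $j\ge 2$, so that $A_1$ gets nothing and the revenue is exactly $1$. Invoking Theorem \ref{folks}, it suffices to verify its snatch-and-bid-zero deviation inequality for every buyer and every step. For $A_1$ the only nontrivial step is $j=1$, where the RHS is $v_{A_1,1}-p_1=0$, matching $u_{A_1}=0$ tightly; for each $A_k$ with $k\ge 2$ the equilibrium utility is $1$ and unit demand makes the RHS at most $1$ at every step; for $B$, at step $1$ the RHS is $v_{B,1}-p_1=\epsilon=u_B$, and at any later step $j$, $B$ already holds item $1$, so by unit demand the RHS equals $(1+\epsilon)-p_1-p_j=\epsilon=u_B$ as well. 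All inequalities hold (several tightly), so Theorem \ref{folks} delivers the desired SPE.

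\textbf{Expected main difficulty.} The subtle point is the pair of simultaneously tight inequalities: $A_1$'s snatch of item $1$ and $B$'s equilibrium action at step $1$ are both on the boundary of their no-deviation conditions. This is precisely the structural reason why the revenue cannot drop below $1$ (any reduction of $p_1$ would be profitably grabbed by $A_1$), yet the equality is exactly compatible with Theorem \ref{folks}, so the plan should go through smoothly.
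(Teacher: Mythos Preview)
Your proof is correct, and the verification of Theorem~\ref{folks} is done carefully (including the tight cases you flagged). However, your construction is genuinely different from the paper's.

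The paper uses a \emph{chain} market: buyer $b_0$ wants only item~$1$, buyer $b_i$ (for $1\le i\le m-1$) wants items $i$ and $i{+}1$, and buyer $b_m$ wants only item~$m$, with values slightly perturbed so that $b_i$'s value is $1+i\epsilon$. You use a \emph{star} market with one generalist $B$ and $m$ specialists $A_1,\dots,A_m$. Both markets admit a revenue-$1$ SPE by the same mechanism (one ``contested'' first item absorbs the only competition, everything else goes for free), and both invoke Theorem~\ref{folks} in the same way. For the Walrasian lower bound, the paper proves a general Hall-type ``completeness'' lemma (for every set of items $T$, strictly more buyers are interested in $T$ than $|T|$, which forces all prices to at least~$1$); your star market instead yields a two-line direct argument via $B$'s indifference across items propagating the bound $p_1\ge 1$ to every $p_j$.

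What each approach buys: your construction is simpler and the Walrasian computation is immediate, which is all that Theorem~\ref{thm-walras} requires. The paper's chain construction is chosen because the \emph{same} market also establishes Theorem~\ref{thm-order-matters}, where one must show that the reverse selling order $m,m{-}1,\dots,1$ forces revenue at least $m$ in \emph{every} SPE; that argument exploits the chain structure (each $b_i$ becomes a single-item bidder once $b_{i+1}$ has bought) and would not go through in your star market. So the paper's extra machinery---the chain and the completeness lemma---is there to serve the companion theorem, not because Theorem~\ref{thm-walras} needs it.
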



\begin{proof}[of Theorems \ref{thm-order-matters} and \ref{thm-walras}]
We begin by describing the market, denoting the items as $1, \ldots, m$, and the buyers
as $\{b_i \ :\  0 \le i \le m \}$.
Each buyer $b_i$ will have a set of items which interest it, and will have the same value of $1 + \epsilon i$ for all the
items in the set (we later take $\epsilon = 1/m^3$ to guarantee the bound on the social welfare).
Buyer $b_0$ is interested in item $1$, and values it for $1$. Buyer $b_i$ for $1 \le i \le m-1$ is an OR between items $i$ and  $i+1$,
and values each of them $1 + i \cdot \epsilon$.
Buyer $b_m$ is only interested in item $m$, and values it $1 + m \cdot \epsilon$.

We start with a bad SPE equilibrium for the order $1, \ldots, m$.
\begin{claim}
There exists an order of items and an SPE whose revenue is $1$.
\end{claim}

\begin{proof}
We apply Theorem \ref{folks} with the allocation $\omega(i) = i$ and prices $p(1)=1$ and $p(i)= 0$,
for $i\geq 2$. That is, item 1 goes to buyer 1 for $1$, item $i$  goes to buyer $i$ for 0, and buyer $0$ gets no items.

Intuitively, buyers $0$ and $1$ compete for item $1$, and buyer $1$
wins. Now there is no more competition in the market, and all items
have price zero.
\end{proof}

Since in equilibrium at least one buyer won't get
any items, it easy to see that the revenue of any equilibrium is at least 1.

We now consider what would happen if we sell the items in the order
$m, m-1, \ldots 1$.

\begin{claim}
In any SPE, for any buyer $b_i$ with $i \ge 1$, we have $\omega(i) = b_i$.
\end{claim}
\begin{proof}
  The proof is by induction. Buyer $b_m$ values $m$ more than
  $b_{m-1}$ does, and $b_m$ only values item $m$. Therefore, in any equilibrium
  $\omega(m) = b_m$. Now if buyer $b_{i+1}$ gets item $i+1$ in step $m - i$ of the auction,
  then in the leftover market buyer $b_i$ is only interested in item $i$. Since $b_i$
  values item $i$ more than buyer $b_{i-1}$ does, buyer $b_i$ must get it, and $\omega(i) = b_i$.
  \end{proof}

We now turn to the prices

\begin{claim}
  In any SPE we have $p_1 \le p_2 \cdots \leq p_m$
\end{claim}
\begin{proof}
  Suppose that $p_{i} > p_{i+1}$ for some $i$. But since we know buyer $i$ gets item $i$, and item
  $i+1$ is sold before item $i$, that would make buyer $i$ regret not bidding more for item $i+1$.
\end{proof}

Finally, $p_1 \ge 1$, as buyer $b_0$ gets no items, and we have $v_{b_0}(\{1\}) = 1$. This gives a total revenue of at least $m$,
and proves Theorem \ref{thm-order-matters}.

To prove Theorem~\ref{thm-walras}
we show that the Walrasian revenue is $m$. It is easy to see that
it is at most $m$, by using the price vector $p_i = 1$ for every $i$, and $\omega(b_i) = i$.

To show that it is at least $m$, we give a more
general claim. Suppose that all the buyers are unit demand and given
a buyer $i$ and an item $j$ either $v_i(j) \ge 1$ or $v_i(j) = 0$. Let
$S_i = \{j \ : \ v_i(j) \ge 1\}$.
Given a set of items $T$ let $B(T)$ be the set of buyers
that want items in $T$, i.e., $B(T)=\{i: S_i\cap T\neq \emptyset\}$.
A set of buyers is ``complete" if for every set of items $T$ we have
$|T| < |B(T)|$.

\begin{claim}
For a complete set of buyers the Walrasian prices for all items price
are at least $1$.
\end{claim}

\begin{proof}
Assume that item $j$ is sold at a price $p<1$. Then any buyer in
$B(j)$ buys an item for price at most $p$. Let $T_1$ be the set of
items bought by the buyers in $B(j)$. Now we can consider $B(T_1)$.
Again, we can continue the process. Since the system is ``complete''
we are guaranteed that we can add more buyers in each iteration. At
the end we will have the set of items include all items, and there
are still buyers that did not buy any item. Those buyers can bid
$p'\in(p,1)$ and have a positive utility. A contradiction that some
item was sold at a price $p<1$.
\end{proof}

Since the set of all items in the theorem is complete, the minimal
Walrasian revenue is $m$. This completes the proof of Theorem~\ref{thm-walras}.
\end{proof}

\section{Demand Reduction and More Complex Valuations}\label{app:complex}
The definition of optimistic conservative bidding allows us to to disqualify threats which are not credible (see \ref{app:con}), and choosing the order wisely lets us build up auction pressure. However, these two are not enough to guarantee (say) the optimal outcome for large families of valuations. One challenge that we see is demand reduction. Indeed, the simplest market beyond unit demand buyers and additive buyers is a market that contains both additive and unit demand buyers.

Consider such a market with two buyers {\tt OR} (unit demand) and {\tt SUM} (additive), and two identical items $1,2$.
Player {\tt OR} is an or player, and values each item for $4$ (but also values the items together for $4$). Player {\tt SUM}
is additive, and values each item for $5$, and both of them for $10$. Since the items are identical, selling order is not an issue.

The Walrasian equilibrium is for {\tt SUM} to buy both items, and pay $4$ on each.

To us the only SPE that makes sense is that both players bid $1$ on the first item, and {\tt OR} gets it.\footnote{If tie breakers are for {\tt SUM} then {\tt SUM} can bid a distribution which has supermum $1$ and is very close to it}. Then, the buyer {\tt SUM} gets the second item for free. In the off path, if somehow {\tt SUM} got the first item, {\tt SUM} also gets the second item, but pays $4$ for it.

In any (optimistic) conservative equilibrium, the price of the first item can not be strictly less than 1, since then {\tt SUM} would want to buy it, and get the second item for $4$.

Note that there are many possible equilibria, but any prediction which does not give rise to this one seems problematic to us. Also note that the equilibrium we presented is the only optimistic conservative equilibrium.

In contrast to the bad equilibrium with the additive players presented in Section \ref{sec:additive} (which was a problem in equilibrium selection), or the bad example presented in \cite{LemeST12-soda} (which was a problem in the selling order), here the problem is with the benchmark - we should not expect optimal welfare, nor any decent revenue\footnote{Note that as the value of {\tt OR} increases the revenue decreases. The reason for that is that as the value rises, {\tt OR} can place a more credible threat on the second item, which incentivizes {\tt SUM} not to take the first item}. The solution to this problem would need to consider new techniques and new benchmarks, which could be of wider interest for understanding demand reduction. 
\section{Optimistic Conservative Bidding}\label{app:con}

One of the greatest challenges for us was to define the correct generalization of conservative bidding to sequential first price auctions.
Recall the original motivation for conservative bidding. Consider a second price auction for a single item, with two buyers $A,B$, where $A$ values
the item for $10$, and $B$ values the item for $5$. One possible equilibrium is that $A$ bids 0, $B$ bids $100$, and $B$ wins paying nothing. This is an equilibrium, since $A$ doesn't want to outbid $B$. We can see this as a type of threat (which is not credible) which $B$ poses to $A$, allowing him to buy the item only if $A$ is willing to pay a large amount.

One could hope that in first price auctions there would be no need for this assumption, since this particular bad example can no longer happen. However, we see something similar in sequential first price auctions, where buyers can pose threats which are not {\em really} credible, but are still formally allowed by the definition of the SPE - take the funny equilibrium in Section \ref{sec:additive} for example.

It is known in economics that SPE is a problematic solution concepts, and one of the natural refinements is trembling hand perfect equilibrium. In a trembling hand equilibrium, each player needs to play best response even if there is a vanishing probability that other players will play some random move.

The problem is that using this notion in auctions as is doesn't work well with the fact that prices are continuous. Indeed take the example we had before, with $A$ valuing the item for $10$, and $B$ valuing the item for $5$, and suppose that the item is sold in first price auction, with ties broken in favor of $A$.

The ``right'' equilibrium, namely $A$ and $B$ both bid $5$ and $A$ wins is not a trembling hand equilibrium, since if $A$ lowers his bid in any way (or doesn't participate in the auction or any other definition of tremble), then bidding $5$ is never a best response for $B$. Indeed, bidding $5$ guaranteers that $B$ walks away with zero utility, no matter what $A$ does. Worse, since the equilibrium where $A$ pays 5 and wins is the only equilibrium under these conditions, the auction has no trembling hand perfect equilibrium.\footnote{One could remedy this by letting the players bid only integer numbers, and limit the possible ways in which $A$ can tremble based on the valuations. This creates its own problems.} Part of the problem is that $B$ has zero utility - but in any auction someone is just holding the price and has zero utility.

As an alternative, we wanted to choose a definition of conservativeness which would prevent buyers from posing threats which are composed of bids that they do not want to win (in the spirit of not allowing threats which are not credible), and also allow for buyers who have zero utility in the equilibrium (such as buyer $B$ in the previous example). Optimistic conservative bidding does just that - it treats every bid that a buyer make as a ``threat'' and requires that if the buyer actually (in some weird off path equilibrium) needs to implement the threat, the buyer is not worse off than in the on path.

For the leaves of the tree, this notion coincides with the usual notion of conservative bidding.

We note that it is enough for all our results that the second highest buyer (who holds the price) is conservative optimistic, but there is no reason to limit the notion only to the second highest bidder.

\section{Submodular valuations - Highest Revenue Equilibrium}\label{app-opt}

In this appendix we prove Corollary \ref{high-revenue-submodular}.
We consider the best equilibrium from the seller's
perspective, and analyze how much revenue it elicits, where our
benchmark here will be the {\em full revenue} which is defined as
the social welfare of the optimal allocation (denoted as $OPT$).

Consider a market with sub-modular buyers. Let
$\pi$ be the following selling order, defined inductively. Let $i_1,
j_1$ the buyer and item that maximize $\max_{i,j} v_i(j)$. Now at
step $k$, let $i_k, j_k$ be the pair of buyer and item who maximize
\[\vm_i(j|G) = v_i(\{x_i(G)\} \cup \{j\}) - v_i(\{x_i(G)\} )\]
where $G$ is the state of the game where buyer $i_r$ wins item $j_r$ for every
$r < k$.

We remind Corollary \ref{high-revenue-submodular}:

\noindent {\em Assume all buyers have a submodular valuation.
If the items are sold according to $\pi$, there is an SPE which
elicits half of the full revenue, or $OPT/2$.}

\begin{proof}
Define the equilibrium path inductively. In the first vertex, let
$i_1$ be the buyer who maximizes $\vm_i(\pi(1))$. Player $i_1$ gets
$\pi(1)$ and pays $\vm_{i_1}(\pi(1))$. In step $j$, let $G_j$ denote
the vertex of the tree. Let $i_j$ be the buyer who maximizes
$\vm_i(\pi(j) | G_j)$. Player $i_j$ wins item $\pi(j)$ and pays
$\vm_{i_j}(\pi(j)  |  G_j)$.

Using Theorem \ref{folks}, one can complete this to an SPE, as for
any $i \neq i_j$ buying item $\pi(j)$ for strictly more than
$\vm_{i_j}(\pi(j)  |  G_j)$ would result in negative utility.

Finally, as the greedy algorithm gives a $2$ approximation to
maximizing a submodular function, this allocation achieves half the
social welfare. As the buyers end up with zero utility, the revenue
must be at least $OPT/2$ as required.
\end{proof}

\end{document}